\newtheorem{lemma}{Lemma}
\newcommand*{\rom}[1]{\expandafter\@slowromancap\romannumeral #1@}
\begin{document}
%
% paper title
% Titles are generally capitalized except for words such as a, an, and, as,
% at, but, by, for, in, nor, of, on, or, the, to and up, which are usually
% not capitalized unless they are the first or last word of the title.
% Linebreaks \\ can be used within to get better formatting as desired.
% Do not put math or special symbols in the title.
\title{Partial Consensus and Conservative Fusion of Gaussian Mixtures for Distributed PHD Fusion}
%
%
% author names and IEEE memberships
% note positions of commas and nonbreaking spaces ( ~ ) LaTeX will not break
% a structure at a ~ so this keeps an author's name from being broken across
% two lines.
% use \thanks{} to gain access to the first footnote area
% a separate \thanks must be used for each paragraph as LaTeX2e's \thanks
% was not built to handle multiple paragraphs
%

\author{Tiancheng~Li,
        ~Juan~M. Corchado, %~\IEEEmembership{Member,~IEEE},
        and~Shudong Sun % ~\IEEEmembership{Member,~IEEE}
        %~Javier Bajo% <-this % stops a space
% \thanks{Copyright (c) 2017 IEEE. Personal use of this material is permitted.}        
\thanks{Manuscript received xxxx}
\thanks{T. Li and J.M. Corchado are with School of Sciences, University of Salamanca, 37007 Salamanca, Spain, E-mail: \{t.c.li, corchado\}@usal.es; During the work, T. Li has undertaken a Secondment at the Institute of Telecommunications, Vienna University of Technology, 1040 Wien, Austria}% <-this % stops a space
\thanks{S. Sun is with School of Mechanical Engineering, Northwestern Polytechnical University, Xi’an 710072, China, E-mail: sdsun@nwpu.edu.cn}% <-this % stops a space
%\thanks{J. Bajo is with Department of Artificial Intelligence, Technical University of Madrid, Madrid 28660, Spain; E-mail: jbajo@fi.upm.es}% <-this % stops a space
%\thanks{J. M. Corchado is also a visiting professor at the Osaka Institute of Technology, Osaka Prefecture 535-8585, Japan.}% <-this % stops a space
\thanks{This work is in part supported by the Marie Sk\l{}odowska-Curie Individual Fellowship (H2020-MSCA-IF-2015) with Grant no. 709267}}%and the Spanish Ministry of Economy and Competitiveness under Torres Quevedo Grant no. PTQ-14-06928
% note the % following the last \IEEEmembership and also \thanks - 
% these prevent an unwanted space from occurring between the last author name
% and the end of the author line. i.e., if you had this:
% 
% \author{....lastname \thanks{...} \thanks{...} }
%                     ^------------^------------^----Do not want these spaces!
%
% a space would be appended to the last name and could cause every name on that
% line to be shifted left slightly. This is one of those ''LaTeX things''. For
% instance, ''\textbf{A} \textbf{B}'' will typeset as ''A B'' not ''AB''. To get
% ''AB'' then you have to do: ''\textbf{A}\textbf{B}''
% \thanks is no different in this regard, so shield the last } of each \thanks
% that ends a line with a % and do not let a space in before the next \thanks.
% Spaces after \IEEEmembership other than the last one are OK (and needed) as
% you are supposed to have spaces between the names. For what it is worth,
% this is a minor point as most people would not even notice if the said evil
% space somehow managed to creep in.

% The paper headers %T. Li, et al. ~Vol.~x, No.~y, 
\markboth{Nov. 20,~2017}%
{Li \MakeLowercase{\textit{et al.}}: Partial Consensus and Conservative Fusion of Gaussian Mixtures for Distributed PHD Fusion}
% The only time the second header will appear is for the odd numbered pages
% after the title page when using the twoside option.
% 
% *** Note that you probably will NOT want to include the author's ***
% *** name in the headers of peer review papers.                   ***
% You can use \ifCLASSOPTIONpeerreview for conditional compilation here if
% you desire.

% If you want to put a publisher's ID mark on the page you can do it like
% this:
%\IEEEpubid{0000--0000/00\$00.00~\copyright~2015 IEEE}
% Remember, if you use this you must call \IEEEpubidadjcol in the second
% column for its text to clear the IEEEpubid mark.

% use for special paper notices
%\IEEEspecialpapernotice{(Invited Paper)}

% make the title area
\maketitle

% As a general rule, do not put math, special symbols or citations
% in the abstract or keywords.
\begin{abstract}
We propose a novel consensus notion, called ``partial consensus'', for distributed GM-PHD (Gaussian mixture probability hypothesis density) fusion based on a peer-to-peer (P2P) sensor network, in which only highly-weighted posterior Gaussian components (GCs) are disseminated in the P2P communication for fusion while the insignificant GCs are not involved. The partial consensus does not only enjoy high efficiency in both network communication and local fusion computation, but also significantly reduces the affect of potential false data (clutter) to the filter, leading to increased signal-to-noise ratio at local sensors. Two ``conservative'' mixture reduction schemes are advocated for fusing the shared GCs in a fully distributed manner. One is given by pairwise averaging GCs between sensors based on Hungarian assignment and the other is merging close GCs based a new GM merging scheme. The proposed approaches have a close connection to the conservative fusion approaches known as covariance union and arithmetic mean density. In parallel, average consensus is sought on the cardinality distribution (namely the GM weight sum) among sensors. Simulations for tracking either a single target or multiple targets that simultaneously appear are presented based on a sensor network where each sensor operates a GM-PHD filter, in order to compare our approaches with the benchmark generalized covariance intersection approach. 
The results demonstrate that the partial, arithmetic average, consensus outperforms the complete, geometric average, consensus. %, especially in case of significant clutter and misdetection.
%The results demonstrate clearly that: the proposed union-type, partial-information-based, fusion markedly outperforms the intersection-type, complete-information-based, fusion. 
%Weakness analysis of GCI contributes to another major part of this paper. 
% GCI may performs favorably in simple scenarios but can easily fail to deal with closely distributed targets or new target appearance while our approaches will not and is computing faster. 
\end{abstract}

% Note that keywords are not normally used for peerreview papers.
\begin{IEEEkeywords}
distributed tracking, average consensus, covariance union, PHD filter, Gaussian mixture, arithmetic mean.
\end{IEEEkeywords}

% For peer review papers, you can put extra information on the cover
% page as needed:
% \ifCLASSOPTIONpeerreview
% \begin{center} \bfseries EDICS Category: 3-BBND \end{center}
% \fi
%
% For peerreview papers, this IEEEtran command inserts a page break and
% creates the second title. It will be ignored for other modes.
\IEEEpeerreviewmaketitle

\section{Introduction}
% The very first letter is a 2 line initial drop letter followed
% by the rest of the first word in caps.
% 
% form to use if the first word consists of a single letter:
% \IEEEPARstart{A}{demo} file is ....
% 
% form to use if you need the single drop letter followed by
% normal text (unknown if ever used by the IEEE):
% \IEEEPARstart{A}{}demo file is ....
% 
% Some journals put the first two words in caps:
% \IEEEPARstart{T}{his demo} file is ....
% 
% Here we have the typical use of a ''T'' for an initial drop letter
% and ''HIS'' in caps to complete the first word.
\IEEEPARstart{T}{he} rapid development of wireless sensor networks (WSNs) in the last decade is in large part responsible for the recent upsurge in interest in WSN-based distributed tracking.
A typical decentralized WSN consists of a number of spatially distributed, interconnected sensors that have independent sensing and calculation capabilities and (only) communicate with the neighbors for information sharing, namely peer-to-peer (P2P) communication. %in which the sensors of limited sensing and calculation capabilities can only communicate with its immediate neighbors, namely peer-to-peer (P2P) communication. %As a result, the notion of distributed/iterative information sharing and fusion (DISF) applied for distributed sensor networking without any fusion center has been raised as a fundamental research topic. 
In particular, due to the appealing fault-tolerance and scalability to large networks, consensus-based distributed algorithms have gained immense popularity in the sensor networks community.

In the consensus-oriented distributed filtering setup, each sensor operates an independent filter while sharing information with its neighbors iteratively to ameliorate each other's estimation with the goal
of converging to the same estimate over the entire network. As a result, local estimation that are made based on both local observation and the information disseminated from neighbors are similar to each other (or in other words, the sensors asymptotically reach a consensus), which are better as compared to the independent estimation without network cooperation \cite{Liu07,Akselrod12,Mohammadi15}. In this paper, we consider the scenario with a time-varying, unknown, number of targets, which are synchronously observed by all sensors in the presence of false and missing observations. 

%Compared to centralized tracking based on a fusion center, distributed tracking offers several advantages, including large scale scalability, higher immunity to network failure, and dynamic adaptability to changes in the network topology. 
Great interest has been seen for extending the theory of average consensus \cite{Xiao04,Olfati-Saber07} for which the item being estimated may be the arithmetic average (considered as the default manner \cite{Olfati-Saber07}, akin to the linear opinion pool \cite{Heskes98}
) or the geometric average \cite{Olfati-Saber06} (akin to the logarithmic opinion pool \cite{Heskes98}
) of the initial values. 

% In either case, there are 
% % to distributed multi-target filtering, which confronts
% two fundamental questions: what information to share among neighbors, and how to best fuse the shared information. 
% % On the former issue% of what information to share
% % , it is of paramount importance to select the most valuable information rather than all to share to meet the affordability of each sensor%, as well as to limit the number of communication iterations to avoid causing delay to the filter
% In particular, both clutter and miss-detection can significantly exacerbate the challenges. 
% Taking into account realistic restrictions%imposed on both communication and computation capabilities of physical sensors
% , the major goal for distributed filter fusion is seeking higher filtering accuracy and better reliability, while fulfilling the communication and computation limitations \cite{Liu07,Akselrod12,Mohammadi15}.

With regard to the type of information disseminated, three main categories of protocols exist; we note that there are protocols such as diffusion \cite{sayed14,Dedecius17} that may belong to either. Our approach falls into the last category:

\begin{enumerate}
\item \textit{Measurement/Likelihood}. Disseminating raw measurement can be practically preventable in communication. Instead, the likelihood function, as a compact representation of the measurement information, is a promising alternative \cite{Hlinka12, Hlinka13, Battistelli14b}. However, in multi-sensor multi-target cases, computationally cumbersome measurements-to-targets association or enumeration \cite{Mahler10} is typically required. Moreover, it is nontrivial to fuse raw measurements reported at sensors of different profiles including detection probabilities, clutter rates, etc. To date, measurement/likelihood consensus is mainly limited to the single target case. 
\item \textit{Estimate/Track}. This involves running tracking algorithms on each sensor separately, yielding a set of tracks to be associated between sensors based on their proximities and then be fused, namely track-to-track fusion \cite{Mori12, Noack14}. When tracks are distant in the state space, this may work well, e.g., \cite{Beaudeau15, Zhu17} otherwise it suffers from the fragility and high computational cost for maintaining a large number of hypotheses. 
\item \textit{Posterior/Intensity}. This involves disseminating and fusing the multi-target posterior \cite{Meyer14, Meyer17} or the density of its statistical moments between sensors. %Usually, MTP does not admit analytical expression except represented by a combination of single-target posteriors such as multiple hypothesizes. 
In particular, the probability hypothesis density (PHD) that is the first order moment of the random target-state set, has been developed as a powerful alternative to the full posterior for time series recursion \cite{Mahler03, Vo06}. In this case, the key is to disseminate and fuse PHDs.
\end{enumerate}

% The key algorithmic challenge for distributed PHD consensus is that the standard average consensus does not apply to multi-target densities. 
As the state of the art, the geometric average for PHDs/multi-target densities is referred to as the Kullback-Leibler average (KLA) \cite{Battistelli13, Battistelli14a, Battistelli14b}, also known as the geometric mean density (GMD) or the exponential mixture density (EMD) \cite{Uney11, Uney13, Bailey12}. The fusion approach is known as generalized covariance intersection (GCI) \cite{Mahler00, Clark10, Mahler12} which extends the Chernoff fusion/covariance intersection \cite{Uhlmann95,Julier97} to multi-target densities. In spite of its success in certain scenarios, deficiencies of GCI, most of which have already been recognized in the literature, are noted in this paper. %\cite{Hurley02, Mahler00, Mahler12}, which has recently been realized in \cite{Uney13, Battistelli13, Battistelli14a, Battistelli14b, Battistelli15, Wang17} for fusing general probability density functions (PDFs). 

%The GCI, as indicated by the terminology, is a generalization of Covariance Intersection (CI), which was firstly proposed \cite{Uhlmann95,Uhlmann96}, realized for Gaussian probability density functions (PDFs) \cite{Julier97} and then extended to Gaussian mixture (GM) \cite{Upcroft05, Julier06}, non-Gaussian PDFs \cite{Ong08, Bailey12} and the cutting-edge random finite set (RFS) framework \cite{Mahler00, Clark10, Mahler12}. Given the sum of fusing weights equal to unity, the GCI provides the weighted Kullback-Leibler average (a.k.a. the geometric mean density/GMD) of original sources. % As an important part of this paper, weakness of GCI for distributed PHD fusion will be noted. 

However, it is not our intention to revise or improve these geometric average consensus approaches. Instead, we propose novel arithmetic average consensus approaches for PHD fusion, which are closely connected to the so-called covariance union (CU) \cite{Uhlmann03,Julier05,Bochardt06,Reece10} and arithmetic mean density (AMD) \cite{Bailey12}. %In other words, our average for the PHDs remains defined in the arithmetic sense rather in the KLA sense. 
%Both CU and AMD have obtained much less attention than the GCI/KLA and so far it is still unknown how they will perform for distributed average consensus-based multi-target filtering yet. 
%CU was more ``conservative'' than the CI and thereby is more tolerant to estimate-fault. Like CI, it was also first proposed for pairwise fusion of two components \cite{Uhlmann03,Julier05} and extended to multiple components \cite{Bochardt06} and to different levels of a prior knowledge of the fusing weights \cite{Reece10}. %When the fusing weights of the Gaussian components are given a priori, the standard mixture reduction (SMR) \cite{Salmond90} scheme is a special case of CU. This will be further confirmed in this paper, with more analysis provided. 
%According to the best of our knowledge, there is no extensions of CU or AMD for distributed average consensus-based multi-target tracking yet.  
In short, there are two distinguishable features with our approaches:
\begin{enumerate}
\item Only the significant components of local PHDs, which are more likely target signals rather than false alarms, are disseminated between neighbors, while the insignificant components that are more suspected to be false alarms will not be involved in either the P2P communication or the consensus fusion. As such, the signal-to-noise ratio (SNR) can be positively enhanced at local sensors. This significantly differs from existing consensus approaches where the (complete) consensus is conditioned on \textit{all} the information available in the network. The consensus in our approach is made based on a part of the information of local posteriors, termed \textit{partial consensus}. %
\item Only closely distributed components, which are more likely corresponding to the same target, are fused, in either of two conservative manners: averaging and merging, based on union calculation rather than intersection. The resulting  consensus remains defined in the default \textit{arithmetic average} sense rather in the KLA sense, which demonstrates particular advantages in dealing with the false and missing observations which are inevitably involved in realistic tracking.
\end{enumerate}

A preliminary part of the merging-for-fusion protocol has been presented in our conference paper \cite{Li17merging_Conf}, in which, however, we did not analyze its connection to AMD/CU, nor did we provide any conservativeness analysis. The merging scheme adopted initially is a standard one \cite{Salmond90}, which as found in this paper can be optimized in the covariance-fusion part. %These voids actually make the work \cite{Li17merging_Conf} less theoretically supported. 
These have now been completed in this article. In addition, we present much more technical extension and new results, including a new, communicatively much cheaper, partial consensus protocol. Therefore, this paper serves as a significant revision and extension to \cite{Li17merging_Conf}. 

The remainder of this paper is organized as follows. The basics of GM-PHD, conservative fusion and GCI are given in Section \ref{sec:background}, followed by the motivation and key idea of our approach in section \ref{sec:Motivation}. The proposed distributed GM-PHD fusion protocol is detailed in Section \ref{sec:protocol}. Simulations are given in Section \ref{sec:simulation} for comparing our approaches with the GCI. In particular, the weakness of the GCI is noted%, as compared with AMD/CU in multi-target cases affected by clutter and misdetection
. We conclude in Section \ref{sec:conclusion}.

\section{Background and Preliminary}
\label{sec:background}
\subsection{RFS and GM-PHD}

We consider an unknown and time-varying number $M_k$ of targets with random states $\mathbf{x}_k^{(n)}$ in the state space $\chi \subseteq \mathbb{R}^d$, $n = 1,2,\cdots,M_k$. The collection of target states at time $k$ can be modeled by a \emph{random finite set} (RFS) $X_k=\left\{{\mathbf{x}_{k,1},\cdots,\mathbf{x}_{k,M_k}}\right\}$ with random cardinality $M_k =|X_k|$. The cardinality distribution $\rho(n)$ of $X_k$ is a discrete probability mass function of $M_k$, i.e., $\rho(n) \triangleq \mathrm{Pr}[M_k \!=\! n]$.

A RFS variable $X$ is a random variable that takes values as unordered finite sets and is uniquely specified by its cardinality distribution $\rho(n)$ and a family of symmetric joint distributions $p_n (\mathbf{x}_1,\mathbf{x}_2,\cdots,\mathbf{x}_n)$ that characterize the distribution of its elements over the state space, conditioned on the set cardinality $n$. Here, a joint distribution function $p_n (\mathbf{x}_1,\mathbf{x}_2,\cdots,\mathbf{x}_n )$  is said to be symmetric if its value remains unchanged for all of the $n!$ possible permutations of its variables. The probability density function (PDF) $f(X)$ of a RFS variable $X$ is given as $f(\{\mathbf{x}_1,\mathbf{x}_2,\cdots, \mathbf{x}_n \})=n! \rho(n) p_n (\mathbf{x}_1,\mathbf{x}_2,\cdots, \mathbf{x}_n )$. 

%The PHD function, also called intensity function in stochastic geometry \cite{Stoyan95}, which is the first-order statistical moment of the set of multi-target states is signified by $D(\mathbf{x})$. 

Instead of propagating the full multi-target density which has been considered computationally intractable, the PHD filter propagates its first order statistical moment \cite{Mahler03}. For a multi-target RFS variable $X$ with the PDF $f(X)$, its first order moment PHD $D_S(\mathbf{x})$ in a region $S \subseteq \chi$ is given as:
\begin{equation}\label{eq:PHD}
D_S(\mathbf{x})=\int_S \delta_X(\mathbf{x})f(X)\delta X \hspace{0.5mm},
\end{equation}
where $\delta_X(\mathbf{x}) \triangleq \sum_{w\in X}\delta_w(\mathbf{x})$ which is used to convert the finite set $X=\{\mathbf{x}_1,\mathbf{x}_2,\cdots \}$ into vectors since the first order moment is defined in the single-target vector space, % and in the vector space, 
$\delta_\mathbf{y}(\mathbf{x})$ denotes the generalized Kronecker delta function, %which equals to one if $\mathbf{x}=\mathbf{y}$ and to zero otherwise.
and the RFS integral in the region $S \subseteq \chi$ is defined as:
\begin{align}\label{eq:SetInt}
& \int_S f(X)\delta X\nonumber \\
&  \triangleq f(\emptyset) + \sum_{n=1}^\infty \int_{S^n} \frac{f(\{\mathbf{x}_1,\mathbf{x}_2,\cdots, \mathbf{x}_n\})}{n!} d\mathbf{x}_1d\mathbf{x}_2\cdots d\mathbf{x}_n \hspace{0.5mm}. 
\end{align}

% The following assumptions are required by the standard PHD filter: (A.1) Each target is assumed to evolve and generate observations independently of others; (A.2) The clutter distribution is assumed to be Poisson and independent of the observations; (A.3) One target can generate no more than one observation at each scan and (A.4) the appearing target process is Poisson. 
%Conceptually, the prediction-updating PHD recursions \cite{Mahler03, Vo06, Mahler07book} follows a Markov-Bayes paradigm as follows:
% \begin{equation}\label{eq:2}
% \rightarrow D_{k-1|k-1}(\mathbf{x}) \rightarrow D_{k|k-1}(\mathbf{x}) \rightarrow D_{k|k}(\mathbf{x}) \rightarrow
% \end{equation}

%Intuitive interpretations of the concept of PHD can be found in, e.g.,\cite{Streit09, Erdinc09} and ultimate performance in \cite{Braca13}. 

%It is theoretically justified and also practically convenient to approximate the PHD by a GM, which allows carrying out the GCI on the GM. Hereafter, 
Straightforwardly, the GM approximation of the whole PHD %, by which we mean the posterior intensity hereafter,
at filtering time $k$ can be written as:
\begin{equation}\label{eq:GMPHD}
D_{k}(\mathbf{x})= \sum_{i=1}^{J_k} w_k^{(i)} \mathcal{N}(\mathbf{x};\mathbf{m}_k^{(i)},\mathbf{P}_k^{(i)}) \hspace{0.5mm},
\end{equation}
where $\mathcal{N}(\mathbf{x;m,P})$ denotes a Gaussian component (GC) with mean $\mathbf{m}$ and covariance $\mathbf{P}$, $J_k$ is the number of GCs in total, and $w_k^{(i)}$ is the weight of $i$th GC. 

The PHD is uniquely defined by the property that its integral in any region gives the expected number of targets in that region. Therefore, the expected number of targets %, or to say the cardinality estimate of the multi-target RFS, 
can be approximated by the weight sum $W_k$  of all GCs, i.e., % as given by
\begin{equation}\label{eq:4}
W_k = \sum_{i=1}^{J_k} w_k^{(i)} \hspace{0.5mm}.
\end{equation}

%In addition to the PHD, the cardinality distribution (a.k.a. the probability distribution of the target number) can also be inferred recursively for more reliable and accurate estimation, e.g., the cardinalized PHD (CPHD) filter \cite{Mahler07cphd} or the regional variance computation \cite{Delande14}. However, it is computationally more intensive (cubic in the number of measurements as compared to the linear complexity in the number of measurements of the PHD filter) and is beyond the scope of this paper. Instead, we will apply the standard average consensus \cite{Xiao04, Xiao05}
% \cite{Li17b}
% to improve the local cardinality estimation. % which has proven to be efficient in significantly improving the cardinality estimation. 

%The local sensor in our proposed distributed protocol will perform GM prediction and updating in the same way as a standard centralized implementation, details for which can be found, for example, in for linear/nonlinear GM-PHD filters and in \cite{Clark07} for the GM-SMC-PHD filter. 
In this paper, we assume each local sensor running a GM-PHD filter, e.g., as given in \cite{Vo06} and that they are synchronous. Our work focuses on the posterior GM dissemination and fusion between neighboring sensors. 

\subsection{Conservative Fusion and Mixture Reduction}

We consider now a sensor network where all the sensors observe the same set of targets but their measurements are conditionally independent. The errors of estimates yielded at local sensors are correlated with each other where the correlation is due to the common prior/noise in the models and common information shared between sensors, etc. %In distributed tracking, since sensors start share information in neighborhood, the network topology, the information fusion protocol, and the number of P2P communication iterations that has been performed will all influent the correlation, and will make it cumbersome to be identified. 
Despite any a priori information on the cross-covariance \cite{Chang97,Chong04, Gao16}, it is typically intractable, if not impossible, to quantify the correlation between sensors, which is at least time-varying due to the P2P communication and prevents optimal fusion (e.g., in the sense of minimum mean square error); see also \cite{Bakr17}. Therefore, a pseudo-optimal, ``conservative", fusion rule is resorted to for %generating a large covariance matrix with the estimate, 
avoiding underestimating the actual squared estimate errors. The benefit to do so includes better fault tolerance and robustness \cite{Uhlmann03,Julier05}. To be more precise, we have the following definition on the notion of ``conservative'', as used in \cite{Uhlmann95,Uhlmann03,Bochardt06,Ajgl14}:

\textit{Definition 1 (conservativeness)}. An estimate pair ($\mathbf{\hat{x}},\mathbf{P}$) of the real state $\mathbf{x}$ (a random vector), consisting of a vector estimate mean $\mathbf{\hat{x}}$ with an associated error covariance matrix $\mathbf{P}$, is called conservative when $\mathbf{P}$ is no less than the actual error covariance of the estimate, i.e., $\mathbf{P} - \mathrm{E} [(\mathbf{x}-\mathbf{\hat{x}})(\mathbf{x}-\mathbf{\hat{x}})^\mathrm{T}]$ is positive semi-definite. 

With the associated covariance matrix being ``conservative'', the estimate pair is also called ``consistent'' \cite{Uhlmann03,Reece10}. However, a consistent estimator is traditionally an estimator that converges in probability to the quantity being estimated as the sample size grows. To avoid confusion, we shall only use the terminologies of ``conservative'' or ``conservativeness''. 

\begin{lemma} \label{lemma1}
A sufficient condition for the fused estimate pair ($\mathbf{\hat{x}},\mathbf{P}$), due to fusing a set of estimate pairs $(\mathbf{\hat{x}}_i,\mathbf{P}_i), i \in \mathcal{I} =\{1,2,\cdots\}$, in which at least one pair is unbiased and conservative, to be conservative is that
\begin{equation} \label{eq:CU_Sufficient}
\mathbf{P} \geq \mathbf{P}_i + (\mathbf{\hat{x}}-\mathbf{\hat{x}}_i)(\mathbf{\hat{x}}-\mathbf{\hat{x}}_i)^\mathrm{T}, \hspace{1mm} \forall i \in \mathcal{I} \hspace{0.5mm}.
\end{equation}
\end{lemma}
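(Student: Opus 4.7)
The plan is to verify Definition~1 directly for the fused pair $(\mathbf{\hat{x}},\mathbf{P})$, leveraging the fact that the hypothesis guarantees at least one index $i^{*}\in\mathcal{I}$ for which $\mathbf{\hat{x}}_{i^{*}}$ is unbiased, i.e.\ $\mathrm{E}[\mathbf{x}-\mathbf{\hat{x}}_{i^{*}}]=\mathbf{0}$, and $\mathbf{P}_{i^{*}}\succeq\mathrm{E}[(\mathbf{x}-\mathbf{\hat{x}}_{i^{*}})(\mathbf{x}-\mathbf{\hat{x}}_{i^{*}})^{\mathrm{T}}]$. Since the sufficient condition (5) is required to hold for \emph{every} $i\in\mathcal{I}$, it holds in particular at this (a~priori unknown) $i^{*}$, and the remainder of the argument only needs the inequality at that single index.

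First, I would pivot through $\mathbf{\hat{x}}_{i^{*}}$ by decomposing the fused error as $\mathbf{x}-\mathbf{\hat{x}}=(\mathbf{x}-\mathbf{\hat{x}}_{i^{*}})+(\mathbf{\hat{x}}_{i^{*}}-\mathbf{\hat{x}})$ and expanding the outer product. Treating the realized estimates $\mathbf{\hat{x}}_{i}$ and the fused value $\mathbf{\hat{x}}$ as deterministic quantities (so that the rank-one correction in (5) is a well-defined matrix) and taking expectation over $\mathbf{x}$, the unbiasedness of $\mathbf{\hat{x}}_{i^{*}}$ annihilates both cross terms, yielding
\[
\mathrm{E}\bigl[(\mathbf{x}-\mathbf{\hat{x}})(\mathbf{x}-\mathbf{\hat{x}})^{\mathrm{T}}\bigr] = \mathrm{E}\bigl[(\mathbf{x}-\mathbf{\hat{x}}_{i^{*}})(\mathbf{x}-\mathbf{\hat{x}}_{i^{*}})^{\mathrm{T}}\bigr] + (\mathbf{\hat{x}}-\mathbf{\hat{x}}_{i^{*}})(\mathbf{\hat{x}}-\mathbf{\hat{x}}_{i^{*}})^{\mathrm{T}}.
\]

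Next, I would upper-bound the first summand on the right by $\mathbf{P}_{i^{*}}$ in the L\"owner order using the conservativeness of the $i^{*}$-th pair, obtaining $\mathrm{E}[(\mathbf{x}-\mathbf{\hat{x}})(\mathbf{x}-\mathbf{\hat{x}})^{\mathrm{T}}]\preceq \mathbf{P}_{i^{*}}+(\mathbf{\hat{x}}-\mathbf{\hat{x}}_{i^{*}})(\mathbf{\hat{x}}-\mathbf{\hat{x}}_{i^{*}})^{\mathrm{T}}$. Chaining this with (5) specialised to $i=i^{*}$ immediately delivers $\mathbf{P}\succeq\mathrm{E}[(\mathbf{x}-\mathbf{\hat{x}})(\mathbf{x}-\mathbf{\hat{x}})^{\mathrm{T}}]$, which is precisely the conservativeness of $(\mathbf{\hat{x}},\mathbf{P})$ in the sense of Definition~1.

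The main obstacle is conceptual rather than algebraic. Because the fusion rule does not know which pair is the unbiased-and-conservative one, (5) must be imposed for every $i\in\mathcal{I}$ so that the key bound is guaranteed to be active at the unknown $i^{*}$; I would emphasise this to justify the ``$\forall i$'' quantifier in the lemma statement. A second subtlety worth spelling out is the implicit modelling choice that the local estimates are viewed as realised values conditioned upon when computing the mean-squared error of $\mathbf{\hat{x}}$: only under this reading do the cross terms vanish cleanly and do the rank-one shifts $(\mathbf{\hat{x}}-\mathbf{\hat{x}}_{i})(\mathbf{\hat{x}}-\mathbf{\hat{x}}_{i})^{\mathrm{T}}$ in (5) acquire their intended meaning as deterministic positive-semidefinite matrices, rather than as cross-covariances that the sufficient condition does not attempt to control.
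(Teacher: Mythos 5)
Your proposal is correct and follows essentially the same route as the paper's own proof: pivot the fused error through the unbiased-and-conservative pair, expand the outer product so that unbiasedness kills the cross terms, bound the remaining term by $\mathbf{P}_{i^{*}}$ via conservativeness, and chain with \eqref{eq:CU_Sufficient}. Your added remarks on why the condition must hold for all $i$ (the favorable index is unknown) and on treating the realized estimates as deterministic are sensible clarifications of points the paper leaves implicit, but they do not change the argument.
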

\begin{proof}
Without lose of generality, supposing $(\mathbf{\hat{x}}_i,\mathbf{P}_i)$ is unbiased and conservative, we have 
\begin{equation}
\mathrm{E} [ (\mathbf{x}-\mathbf{\hat{x}}_i)(\mathbf{\hat{x}}_i-\mathbf{\hat{x}})^\mathrm{T} ] =   \mathbf{0} \hspace{0.5mm}, %,  \mathrm{E} [(\mathbf{\hat{x}}_i-\mathbf{\hat{x}})(\mathbf{x}-\mathbf{\hat{x}}_i)^\mathrm{T}] =   \mathbf{0}
\end{equation}
\begin{equation}
\mathbf{P}_i \geq \mathrm{E} [(\mathbf{x}-\mathbf{\hat{x}}_i)(\mathbf{x}-\mathbf{\hat{x}}_i)^\mathrm{T}] \hspace{0.5mm}.
\end{equation}
due to the unbiasedness and conservativeness, respectively.

By decomposing $\mathbf{x}-\mathbf{\hat{x}}_i $ as $(\mathbf{x}-\mathbf{\hat{x}}_i)+(\mathbf{\hat{x}}_i-\mathbf{\hat{x}})$, we obtain $\mathrm{E} [(\mathbf{x}-\mathbf{\hat{x}})(\mathbf{x}-\mathbf{\hat{x}})^\mathrm{T}] \leq \mathbf{P}_i + (\mathbf{\hat{x}}-\mathbf{\hat{x}}_i)(\mathbf{\hat{x}}-\mathbf{\hat{x}}_i)^\mathrm{T}$ easily to finish the proof.
% Consequently and using basic calculus,    
% \begin{align}
% \mathrm{E} [(\mathbf{x}-\mathbf{\hat{x}}) &(\mathbf{x}-\mathbf{\hat{x}})^\mathrm{T}] \nonumber \\ 
% = & \mathrm{E} [(\mathbf{x}-\mathbf{\hat{x}}_i+\mathbf{\hat{x}}_i-\mathbf{\hat{x}})(\mathbf{x}-\mathbf{\hat{x}}_i+\mathbf{\hat{x}}_i-\mathbf{\hat{x}})^\mathrm{T}] \nonumber \\ 
% = & \mathrm{E} [(\mathbf{x}-\mathbf{\hat{x}}_i)(\mathbf{x}-\mathbf{\hat{x}}_i)^\mathrm{T} +(\mathbf{\hat{x}}_i-\mathbf{\hat{x}})(\mathbf{\hat{x}}_i-\mathbf{\hat{x}})^\mathrm{T}] \nonumber \\ 
% & + \mathrm{E} [ (\mathbf{x}-\mathbf{\hat{x}}_i)(\mathbf{\hat{x}}_i-\mathbf{\hat{x}})^\mathrm{T}+(\mathbf{\hat{x}}_i-\mathbf{\hat{x}})(\mathbf{x}-\mathbf{\hat{x}}_i)^\mathrm{T}] 
% \nonumber \\ 
% \leq & \mathbf{P}_i + (\mathbf{\hat{x}}-\mathbf{\hat{x}}_i)(\mathbf{\hat{x}}-\mathbf{\hat{x}}_i)^\mathrm{T} \label{eq:fusedCov}
% \end{align}
% Combining \eqref{eq:CU_Sufficient} and \eqref{eq:fusedCov} gives that $\mathbf{P} \geq \mathrm{E} [(\mathbf{x}-\mathbf{\hat{x}})(\mathbf{x}-\mathbf{\hat{x}})^\mathrm{T}]$, which finishes the proof.
\end{proof}

\begin{lemma} \label{lemma2}
Given a set of conservative estimate pairs $(\mathbf{\hat{x}},\tilde{\mathbf{P}}_i),i \in \mathcal{I}=\{1,2,\cdots\}$ of the same, unbiased estimate mean associated with possibly different error-covariance matrix, a sufficient condition for the fused estimate pair ($\mathbf{\hat{x}},\mathbf{P}$), to be conservative is given by
\begin{equation} \label{eq:lemma2_P}
\mathbf{P} \geq  \sum_{i \in \mathcal{I}} \omega_i \tilde{\mathbf{P}}_i \hspace{0.5mm},
\end{equation}
where the non-negative scaling parameters $\omega_i \geq 0$, $\sum_{i \in \mathcal{I}} \omega_i = 1$ are called fusing weights hereafter. 
\end{lemma}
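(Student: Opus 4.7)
The plan is to exploit the crucial hypothesis that every estimate in the set shares the same mean $\mathbf{\hat{x}}$, so that Definition~1 applied to each pair $(\mathbf{\hat{x}},\tilde{\mathbf{P}}_i)$ yields the same matrix inequality relative to the same actual error covariance. Specifically, since each pair is conservative, for every $i\in\mathcal{I}$ we have
\begin{equation}
\tilde{\mathbf{P}}_i \;\geq\; \mathrm{E}\bigl[(\mathbf{x}-\mathbf{\hat{x}})(\mathbf{x}-\mathbf{\hat{x}})^\mathrm{T}\bigr].
\end{equation}
This puts us in a much simpler setting than Lemma~\ref{lemma1}: the offset terms $(\mathbf{\hat{x}}-\mathbf{\hat{x}}_i)(\mathbf{\hat{x}}-\mathbf{\hat{x}}_i)^\mathrm{T}$ vanish identically.

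Next I would take a convex combination. Since the positive semi-definite cone is closed under non-negative scaling and addition, multiplying the above inequality by $\omega_i\geq 0$ and summing over $i$ preserves the ordering, so
\begin{equation}
\sum_{i\in\mathcal{I}} \omega_i \tilde{\mathbf{P}}_i \;\geq\; \Bigl(\sum_{i\in\mathcal{I}} \omega_i\Bigr)\,\mathrm{E}\bigl[(\mathbf{x}-\mathbf{\hat{x}})(\mathbf{x}-\mathbf{\hat{x}})^\mathrm{T}\bigr] \;=\; \mathrm{E}\bigl[(\mathbf{x}-\mathbf{\hat{x}})(\mathbf{x}-\mathbf{\hat{x}})^\mathrm{T}\bigr],
\end{equation}
using $\sum_i \omega_i = 1$.

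Finally, chaining this with the hypothesis \eqref{eq:lemma2_P} gives $\mathbf{P} \geq \sum_{i} \omega_i \tilde{\mathbf{P}}_i \geq \mathrm{E}[(\mathbf{x}-\mathbf{\hat{x}})(\mathbf{x}-\mathbf{\hat{x}})^\mathrm{T}]$ by transitivity of the Loewner order, which is exactly the conservativeness of $(\mathbf{\hat{x}},\mathbf{P})$ per Definition~1. There is no real obstacle here: the proof is essentially immediate once one observes that the common-mean assumption kills the bias offset that appears in Lemma~\ref{lemma1}, reducing the argument to a convexity/positive-semi-definite closure statement. The only thing worth emphasizing in the write-up is that the result genuinely relies on \emph{all} pairs being conservative (as opposed to Lemma~\ref{lemma1}, where only one needed to be), because the convex combination has to upper-bound the actual covariance term-by-term.
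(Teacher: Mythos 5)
Your proposal is correct and follows essentially the same route as the paper: apply Definition~1 to each common-mean pair to get $\tilde{\mathbf{P}}_i \geq \mathrm{E}[(\mathbf{x}-\mathbf{\hat{x}})(\mathbf{x}-\mathbf{\hat{x}})^\mathrm{T}]$, scale by $\omega_i$, sum using $\sum_i \omega_i = 1$, and conclude by transitivity of the Loewner order. Your added remarks on why the offset term of Lemma~\ref{lemma1} vanishes and why all pairs must be conservative are accurate but not a departure from the paper's argument.
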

\begin{proof}
The conservativeness of fusing estimate pairs reads
\begin{equation} \label{eq:Covlemma2}
\tilde{\mathbf{P}}_i \geq \mathrm{E} [(\mathbf{x}-\mathbf{\hat{x}}) (\mathbf{x}-\mathbf{\hat{x}})^\mathrm{T}], \hspace{1mm} \forall i \in \mathcal{I} \hspace{0.5mm}.
\end{equation}
The proof is simply done by multiplying both sides of \eqref{eq:Covlemma2} by $\omega_i$ and summing up over all $i \in \mathcal{I}$, which leads to 
\begin{equation}
\sum_{i \in \mathcal{I}} \omega_i \tilde{\mathbf{P}}_i \geq \mathrm{E} [(\mathbf{x}-\mathbf{\hat{x}}) (\mathbf{x}-\mathbf{\hat{x}})^\mathrm{T}] \hspace{0.5mm}.
\end{equation}
\end{proof}

\textit{Definition 2 (Standard Mixture Reduction, SMR)}. Given a set of estimate pairs $(\mathbf{\hat{x}}_i,\mathbf{P}_i)$ weighted as $w_i, i \in \mathcal{I}$, respectively, the SMR scheme \cite{Salmond90} fuses them into a single estimate pair $(\mathbf{\hat{x}}_\mathrm{SMR}, \mathbf{P}_\mathrm{SMR})$ with weight $w_\mathrm{SMR}$, given by
\begin{equation} \label{eq:GMmerging_w}
w_\mathrm{SMR}=\sum_{i \in \mathcal{I}} w_i \hspace{0.5mm},
\end{equation}
\begin{equation} \label{eq:GMmerging_x}
\mathbf{\hat{x}}_\mathrm{SMR} = \frac{\sum_{i \in \mathcal{I}} w_i\mathbf{\hat{x}}_i}{\sum_{i \in \mathcal{I}} w_i} \hspace{0.5mm},
\end{equation}
\begin{equation} \label{eq:GMmerging_P}
\mathbf{P}_\mathrm{SMR} = \frac{\sum_{i \in \mathcal{I}} w_i\tilde{\mathbf{P}}_i}{\sum_{i \in \mathcal{I}} w_i} \hspace{0.5mm},
\end{equation}
where the adjusted covariance matrix is given by (cf. \eqref{eq:CU_Sufficient})
\begin{equation} \label{eq:CovAdjut}
\tilde{\mathbf{P}}_i=\mathbf{P}_i + (\mathbf{\hat{x}}_\mathrm{SMR}-\mathbf{\hat{x}}_i)(\mathbf{\hat{x}}_\mathrm{SMR}-\mathbf{\hat{x}}_i)^\mathrm{T} \hspace{0.5mm}.
\end{equation}

\begin{lemma} \label{lemma3}
Given that all the fusing estimate pairs are unbiased and conservative, the resulting estimate pair given by the SMR scheme as in \eqref{eq:GMmerging_x}-\eqref{eq:GMmerging_P} is unbiased and conservative.
\end{lemma}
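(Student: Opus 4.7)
The plan is to prove the statement by invoking Lemma \ref{lemma1} and Lemma \ref{lemma2} in succession, so the proof reduces to bookkeeping rather than fresh inequalities. First I would dispose of unbiasedness, which is immediate: each $\hat{\mathbf{x}}_i$ is unbiased by hypothesis, so by linearity of expectation the convex combination in \eqref{eq:GMmerging_x} with coefficients $w_i/\sum_{j\in\mathcal{I}} w_j$ yields $\mathrm{E}[\hat{\mathbf{x}}_\mathrm{SMR}] = \mathrm{E}[\mathbf{x}]$, which also justifies the subsequent use of the unbiased-mean hypothesis in Lemma \ref{lemma2}.

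For the conservativeness of $\mathbf{P}_\mathrm{SMR}$, I would first apply Lemma \ref{lemma1} separately for each $i \in \mathcal{I}$, using the fact that the single pair $(\hat{\mathbf{x}}_i,\mathbf{P}_i)$ is unbiased and conservative. With the fused mean taken to be $\hat{\mathbf{x}}_\mathrm{SMR}$, the sufficient condition \eqref{eq:CU_Sufficient} collapses to the single inequality $\mathbf{P} \geq \mathbf{P}_i + (\hat{\mathbf{x}}_\mathrm{SMR} - \hat{\mathbf{x}}_i)(\hat{\mathbf{x}}_\mathrm{SMR} - \hat{\mathbf{x}}_i)^{\mathrm{T}} = \tilde{\mathbf{P}}_i$, which is satisfied with equality by the choice $\mathbf{P} = \tilde{\mathbf{P}}_i$. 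Hence, for every $i$, the estimate pair $(\hat{\mathbf{x}}_\mathrm{SMR}, \tilde{\mathbf{P}}_i)$ is conservative for the true state $\mathbf{x}$.

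Now I have a family of conservative pairs all sharing the common, unbiased mean $\hat{\mathbf{x}}_\mathrm{SMR}$, which is exactly the setting of Lemma \ref{lemma2}. Choosing the fusing weights $\omega_i = w_i / \sum_{j\in\mathcal{I}} w_j$, which are nonnegative and sum to one, the convex combination $\sum_{i\in\mathcal{I}} \omega_i \tilde{\mathbf{P}}_i$ equals $\mathbf{P}_\mathrm{SMR}$ as defined in \eqref{eq:GMmerging_P}. Lemma \ref{lemma2} then delivers $\mathbf{P}_\mathrm{SMR} \geq \mathrm{E}[(\mathbf{x} - \hat{\mathbf{x}}_\mathrm{SMR})(\mathbf{x} - \hat{\mathbf{x}}_\mathrm{SMR})^{\mathrm{T}}]$, completing the proof.

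I do not expect a genuine obstacle; the two preparatory lemmas neatly encode the two moves needed, namely (i) shifting an unbiased conservative pair to a new mean by inflating the covariance with the outer product of the mean discrepancy, and (ii) taking convex combinations of pairs that share a common mean. The only point to handle carefully is the quantifier ``$\forall i$'' in Lemma \ref{lemma1}: it is applied here to a singleton index set in step (i), so the condition holds trivially for each $i$ considered in isolation, after which the full index set reappears only at the convex-combination stage governed by Lemma \ref{lemma2}.
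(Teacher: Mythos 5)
Your proof is correct and follows essentially the same route as the paper: unbiasedness from the convex combination, then Lemma \ref{lemma1} applied per index to show each pair $(\hat{\mathbf{x}}_\mathrm{SMR},\tilde{\mathbf{P}}_i)$ is conservative, then Lemma \ref{lemma2} with weights $\omega_i = w_i/\sum_j w_j$ to conclude for $\mathbf{P}_\mathrm{SMR}$. Your version is simply a more carefully spelled-out account of the same two-step argument the paper gives.
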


\begin{proof}
The proof is straightforwardly based on Lemmas \ref{lemma1} and \ref{lemma2}. First, unbiasedness is due to the convex combination. Second, given that each $(\mathbf{\hat{x}}_i,\mathbf{P}_i), \forall i \in \mathcal{I} = \{1,2,\cdots \}$ is unbiased and conservative, $(\mathbf{\hat{x}}_\mathrm{SMR},\mathbf{\tilde{P}}_i)$ is conservative according to Lemma \ref{lemma1}, and so their convex combination of $(\mathbf{\hat{x}}_\mathrm{SMR},\mathbf{\tilde{P}}_\mathrm{SMR})$ is conservative according to Lemma \ref{lemma2}.
\end{proof}

It is important to note that, considering ``conservativeness" only, the fusing weights used to get a conservative fused covariance matrix as in \eqref{eq:GMmerging_P} do not have to be the same as that to get the fused state as in \eqref{eq:GMmerging_x}. But instead, it is typically of interest to use different fusing weights to get an optimal fused covariance in some sense, while retaining conservativeness. For this, we have the following proposition, akin to the CI-based optimization \cite{Chen02,Niehsen02}. 

\textbf{Proposition 1}. The covariance-fusing weights for \eqref{eq:lemma2_P} can be determined such that the trace of the resulting covariance matrix is minimized, i.e., $\big\{\omega_i\big\}_{i \in \mathcal{I}} = \underset{\omega_i,i \in \mathcal{I}}{\mathrm{argmin}} \hspace{0.5mm} \mathrm{tr}\Big(\sum_{i \in \mathcal{I}} \omega_i\tilde{\mathbf{P}}_i\Big)$. %, where $\sum_{i \in \mathcal{I}} \tilde{w}_i =1$
% Given that all fusing covariance matrix are positive definite,
Thanks to the convex combination and positive trace of the matrices, 
the solution is simply given by $\omega_i=1, \omega_j=0, \forall j \neq i, j \in \mathcal{I}$ where $i=\underset{i \in \mathcal{I}}{\mathrm{argmin}} \hspace{0.5mm} \mathrm{tr} (\tilde{\mathbf{P}}_i)$. That is, the trace-minimal yet conservative fused covariance is given by 
\begin{equation} \label{eq:minGM_P}
\mathbf{P}_\mathrm{OMR} = 
\underset{\tilde{\mathbf{P}}_i}{\mathrm{argmin}} \hspace{0.5mm} \mathrm{tr}\big(\tilde{\mathbf{P}}_i\big) \hspace{0.5mm}.
% \hspace{0.5mm}.
\end{equation}

Hereafter, we refer to the MR scheme based on \eqref{eq:GMmerging_w}, \eqref{eq:GMmerging_x} \eqref{eq:minGM_P} and \eqref{eq:CovAdjut} as the optimal mixture reduction (OMR), which differs from the SMR only in the covariance-fusion part. It is a type of fusion seeking conservativeness, given that all fusing estimate pairs are unbiased and conservative. 

\subsection{GCI Fusion and KLA}
Given a set of posteriors $f_i \in \Psi, i \in \mathcal{I}$ to be fused by the fusing weights $\omega_i \geq 0$, where $\Psi$ denotes the set of PDFs over the state space $\chi$, 
and 
$\mathcal{I}=\{1,2,\cdots\}$ denotes the fusing sensor set, %and $\sum_{i\in \mathcal{I}} {\omega_i} =1$,
the GCI/Chernoff fusion \cite{Mahler00} which resembles the logarithmic opinion pool \cite{Heskes98} and the belief consensus \cite{Olfati-Saber06} reads
\begin{equation}\label{eq:GCI}
f_\mathrm{GCI} \triangleq C^{-1}\prod_{i\in \mathcal{I}}{f_i^{\omega_i}} \hspace{0.5mm},
\end{equation}
where $C$ is a normalization constant.

%The consistency/conservativeness property of GCI is investigated in \cite{Bailey12} while explicit GCI formulas are derived for Bernoulli, Poisson, and independent cluster process multitarget densities respectively in \cite{Clark10}. 

The GCI fusing result is also known as GMD \cite{Bailey12} or EMD \cite{Julier06, Uney11, Uney13}, which actually minimizes the weighted sum of its KLD with respect to all posteriors $f_i, \forall i\in \mathcal{I}$, and is, therefore, also referred to as the weighted Kullback-Leibler average (KLA) \cite{Battistelli13, Battistelli14b}, i.e., 

\begin{equation}\label{eq:kla}
f_\mathrm{KLA} = \underset{f \in \Psi}{\mathrm{arginf}} \sum_i {\omega_i d_{KL}(f||f_i)} \hspace{0.5mm},
\end{equation}
where $d_{KL} (f_a ||f_b) \triangleq \int f_a (X) \frac{f_a (X)}{f_b (X)}\delta X $ is the set-theoretical KLD of the intensities $f_a$ from $f_b$.

% Despite its success in many applications, 
Three challenges arise due to the exponentiation and product calculation when the posterior $f_i$ in \eqref{eq:GCI} is given by a mixture, such as the GM:
\begin{enumerate}
\item The fractional order exponential power of a GM %is computationally intractable and 
does not provide a GM. Existing solutions are based on either analytical approximation that only appeals to special mixtures (e.g., components are well distant) \cite{Battistelli13, Battistelli14a, Battistelli14b, Battistelli15} or numerical approximation via important sampling \cite{Mariam07,Li17Nehorai} or sigma point method \cite{Gunay16}. 
\item The product rule is prone to mis-detection. %As a result, if one target is mis-detected at one sensor, it 
Misdetection at one sensor 
will remarkably degrade the detection at the other sensors since any signal multiplied by a weak signal of almost zero energy will be greatly weakened. See also illustration given in \cite{Yu16,Yi17}. 
\item The GCI/KLA fusion will typically result in a multiplying number of fused GCs \cite{Battistelli13}, which is costly in both communication and computation. 
\end{enumerate}

These problems can lead to disappointing results in certain cases%, some of which have been recognized in the literature
, which will be discussed within our simulation study in Section \ref{sec:simulation}.C. To overcome these deficiencies, we propose alternatives without the intrinsic need for exponentiation and product calculation of mixtures while being ``conservative'' not only in fusion but also in communication. In the following distributed formulation, we will use subscripts $a$ and $b$ to distinguish between two neighboring sensors. Since all calculations regard the same filtering time $k$, we drop the subscript $k$ for notation simplicity.

\section{Key Idea and Properties of Our Proposal} \label{sec:Motivation}
%We do not intend to investigate how to improve GCI, but 
% The former two issues are theoretical considerations while the third is mainly a practical one. %As an alternative to GCI, we apply the idea of generalized covariance union (GCU) \cite{Reece10,Uhlmann03} for conservative and consistent GM fusion. It corresponds to an
The section presents two ``conservative'' principles for distributed fusion algorithm design, which constitute the key idea of our approaches:

\begin{itemize}
\item \textit{P.1 Conservative communication}. Consensus should only be sought on the information of targets. To get this maximally respected, only the GC of significance (those that are highly likely to corresponding to the ``target'') should be disseminated, while the insignificant GC (those that are more like false alarms) should be the least involved for conservative consideration. We refer to this as the ``conservative communication'' principle. Here the ``conservativeness'' is not the same as the estimate ``conservativeness'' given in Definition 1. We assume that the reader will not be confused. 
\item \textit{P.2 Conservative fusion}. Only highly relevant information, namely that which corresponds to the same target as at least very likely, should be fused and the fused results should retain ``conservativeness'', to deal with the unknown correlation between sensors. We refer to this as the ``conservative fusion'' principle.
\end{itemize}

% As the main content of our distributed GM-PHD fusion protocol, this section will focus on addressing the following three issues which form the key innovation of our protocol:
% \begin{itemize}
% \item \textit{Step.1 Information identification}. Which GCs to share?
% \item \textit{Step.2 Information communication}. How to share the GCs selected in Step 1 in neighborhood?
% \item \textit{Step.3 Information fusion}. How to integrate/fuse the GCs received in Step 2 into the local PHD of each sensor?
% \end{itemize}

%
\subsection{Conservative Communication for Partial Consensus}
First, the mixture reduction is carried out in local GM filters as usual at each filtering step, before network communication, to control the GM size. 
%the number of GCs in the local GM filter typically grows in an exponential manner, and thus, MR is necessary at local sensors for which we apply the OMR. This should be carried out before networking, in order to reduce the communication cost. 

%For simplicity, we use the threshold-based MR schemes \cite{Salmond90,Vo06}. 
%A general principle to do so is to minimize a cost function involving the original mixture and the corresponding reduced mixtures 
% More advanced MR schemes can be found in \cite{Williams06,Runnalls07, Schieferdecker09, Chen10, Crouse11, Ardeshiri15, Li17Nehorai}. 
% Williams06, % Much easier implementations are based on thresholds, for which the reader is referred to e.g. \cite{Vo06}.

Second, only highly-weighted GCs that are more likely corresponding to real targets rather than false alarms are disseminated between neighbors. To this end, we propose two alternative rules to identify these target-likely GCs, referred to as \textit{rank rule} and \textit{threshold rule}, respectively. 

\begin{itemize}
\item \textit{P.1.1 Rank rule}. Specify the number of GCs to be disseminated as equal to the intermediately estimated number of targets at each sensor using the closest integer to $W_k$ as in \eqref{eq:4} or more straightforwardly, specify a fixed number of GCs when a priori information (e.g., maximum) about the number of targets is known. Then, only the corresponding number of GCs with the greatest weights are transmitted to the neighbors. 

\item \textit{P.1.2 Threshold rule}. Specify a weight threshold $w_s$, and only the component that is weighted greater than that threshold will be transmitted.
\end{itemize}

It is also possible to use a hybrid, arguably more conservative, criterion such that only the GCs that fulfill both rules are chosen, or a hybrid, less conservative, criterion such that any GCs that fulfill either rules can be chosen. In any case, we refer to them as a separate GM, hereafter called Target-likely GM (T-GM) and denote the T-GM size at sensor $a$ as $n_a$, i.e., % give the MAP (maximum a posterior) estimates for the local posterior \cite{Vo06, Braca13}. They can be written 
\begin{equation}\label{eq:PHD4}
D_{a,\mathrm{T}}(\mathbf{x})\triangleq \sum_{i=1}^{n_a} w_a^{(i)} \mathcal{N}(\mathbf{x};\mathbf{m}_a^{(i)},\mathbf{P}_a^{(i)}) \hspace{0.5mm},
\end{equation}
of which the total weight ($\leq W_a $) is given as
\begin{equation}\label{eq:PHD5}
W_{a,\mathrm{T}} = \sum_{i=1}^{n_a} w_a^{(i)} \hspace{0.5mm}.
\end{equation}

Correspondingly, the remaining GC at each sensor is called false alarm-suspicious GC (FA-GC), which will not be involved in the neighborhood communication. %To note, we do not simply interpret the weight of each GC as its real likelihood of corresponding to a target, but as a reference how much it is worth for communication and fusion, given a restricted communication allowance.

\textit{Definition 3 (partial consensus)} The consensus yielded by disseminating among sensors an incomplete part of the information they own, i.e., only target-likely GCs in our approaches, is called \textit{partial consensus}. % as they maximally reach consensus on the highly-weighted GC part only. Consequently, the consensus obtained is ``partial consensus''. 

\subsection{Conservative AMD Fusion}
Different to the KLA optimality of the GMD as in \eqref{eq:kla}, the AMD \cite{Bailey12} calculates the average of posterior multi-target density in the arithmetic sense \cite{Olfati-Saber07} rather than in the geometric average sense \cite{Olfati-Saber06}, or equivalently speaking, based on the linear opinion pool rather than the logarithmic opinion pool.

\textit{Definition 4 (AMD)}. The AMD of multiple posteriors $f_i, i \in \mathcal{I}$, is given as follows:
\begin{equation}\label{eq:AMD}
f_\mathrm{AMD} %\frac{1}{\sum_{i \in \mathcal{I}}\omega_i}
\triangleq \sum_{i \in \mathcal{I}}{\omega_if_i} \hspace{0.5mm},
\end{equation}
where the fusing weights $\omega_i \geq 0, \sum_{i\in \mathcal{I}} {\omega_i} =1$. As addressed, $f_i$ is only a partial PHD obtained at sensor $i$ in our work. 

% \begin{theorem}
% The AMD as given in \eqref{eq:AMD} minimizes the weighted sum of its Euclidean distances to all fusing posteriori $f_i, \forall i \in \mathcal{I}$
% \begin{equation}\label{eq:kla}
% f_\mathrm{KLA}=\underset{f}{\mathrm{arginf}} \sum_i {\omega_i d_{E}(f||f_i)} 
% \end{equation}
% where the Euclidean distance between two multitarget densities $f_i(X)$ and $f_j(X)$ is defined as 
% \begin{equation}
% d_{E}(f_i||f_j) = \int |f_i(X)-f_j(X)|\delta X
% \end{equation}
% \end{theorem}
% \begin{proof}
% \begin{align}\label{eq:AMDoptimality}
% \sum_i {\omega_i d_{KL}(f||f_i)} &= \sum_i {\omega_i \int (f(X)-f_i(X))}\delta X \nonumber \\
% & = \sum_i {\int (f(X)-\omega_i f_i(X))}\delta X \nonumber \\
% & =  {\int (f(X)- \sum_i \omega_i f_i(X))}\delta X \nonumber \\
% \end{align}
% \end{proof}

% \textbf{Remark 2}. Obviously, the aforementioned problems can be avoided or combated in the proposed novel fusion protocol based on Union rather than product.

%Different to the KLA, the distance between distributions is measured in Euclidean space in \eqref{eq:AMD} rather than in KLD. %is analogous to the generalized covariance union (GCU) \cite{Reece10} as shown later, which 

As shown, the AMD is given by a convex union of multisensor posteriors, which does not double count information \cite{Bailey12} and is provably conservative (cf. Lemma 2)%prone to misdetection
. It was further compared with the GMD in  \cite{Bailey12} as that, %``The GMD has a form similar to the product rule and Bayes’ theorem, and so is properly seen as a fusion rule for dependent information. It addresses dependence but requires that each component PDF is individually consistent. The AMD, on the other hand, performs amalgamation not fusion, and generates a mixture model from the component PDFs. Its form is related to the sum rule, and as such it marginalises over the set of hypotheses given by the component PDFs, and so can incorporate both consistent and inconsistent hypotheses.'' As such, the advantage of the AMD over the GMD is that
``the GMD is potentially inconsistent if a single component is inconsistent while the AMD is conservative if even a single component is consistent'' [cf. Lemma 1]. Indeed, the union-type AMD fusion is less prone to the problem of misdetection as it does not involve product calculation. More importantly, it does not fuse the information of different targets and of clutter unless they lie to each other too close. % But instead, a reliable fusion should distinguish the signals between different targets and from false alarms (FAs). %This forms the backbone of our novel distributed multi-target intensity fusion protocol.

The AMD of GMs, %resembling a type of multi-sensors information superposition, 
can be easily realized through re-weighting (by using the fusing weights) and combining GMs in neighborhood. Similar idea has actually been applied \cite{Yu16} for pairwise gossip-based fusion and for averaging the ``generalized likelihood'' \cite{Streit08}. Basically, two key issues need to be solved%, to be addressed separately next in this paper
. First, we need a proper mechanism to design the fusing weights. The most straightforward solution is given by uniform fusing weights, which may not guarantee efficient consensus convergence and appeals primarily to the case when only few P2P communication iterations are allowed. For faster convergence, the popular Metropolis weights \cite{Xiao04, Xiao05} approach is readily competent, given a large number of communication iterations. It determines the fusing weights for the information from sensor $b$ at the host sensor $a$ as follows
\begin{equation}\label{eq:Metropolis} %^{\mathrm{I}}
\omega_{b\rightarrow a} = \left\{
\begin{array}{ll} 
\frac{1}{1+\max{(|N_a|,|N_b|)}} & \mathrm{if} \hspace{1mm}  b\in N_a, b\neq a \hspace{0.5mm},\\
1-\sum_{l\in N_a}{\omega_{l\rightarrow a}} & \mathrm{if} \hspace{1mm} b=a\hspace{0.5mm},\\
0 & \mathrm{if} \hspace{1mm} b\not\in N_a\hspace{0.5mm},
\end{array} \right.
\end{equation} 
where $N_a$ denotes the set of neighbors of sensor $a$ (excluding $a$).

Second, the AMD of $N$ GCs and $M$ GCs as in \eqref{eq:AMD} will have $N+M$ GCs, which is in general much smaller than $N\times M$ yielded by GCI. Still the local GM size grows linearly with the number of fusing sensors. In order to reduce the number of GCs to be transmitted and to maintain a stable overall GM size, we next present two conservative MR schemes for fusing the gathered T-GMs in a fully distributed fashion. %Our approaches are consistent to the approach of generalized covariance union (GCU) \cite{Reece10,Uhlmann03} for conservative and consistent GM fusion.   

\section{Conservative MR schemes}
 \label{sec:protocol}
We use $t\in \mathbb{N}=\{0,1,2,...\}$ to denote the P2P communication iteration. $t=0$ means the original statue of the local sensor without any communication. This section presents two MR approaches in line with the conservative fusion principle, based on either OMR or pairwise GM averaging, which need to be executed at each P2P communication iteration. 
%The T-GCs whether fused with others received from neighbors (in the terms of merging or averaging to be explained below) or not may form the T-GM to consider at the next communication iteration, which do not need to be re-identified at each communication iteration.

\subsection{Conservative Fusion P.2.1: GM Merging}
The first MR protocol for T-GM fusion is given by combining the newly received and the local T-GMs into one set and merging the close T-GCs based on the proposed OMR. Before this, the GC weights should be scaled by using the fusing weights as addressed, according to their origination sensor. However, as shown in our simulation that this protocol typically bears high communication cost (which increases with $t$) and more iterations ($t>2$) do not yield significantly more benefit, we do not suggest a larger number of communication iterations. Therefore, uniform fusing weights are more preferable (especially for $t\leq2$). % In addition, it is computationally costly to solve the optimization problem \eqref{eq:minGM_P} when the merging GM size $|\mathcal{I}|$ is large while the difference between \eqref{eq:minGM_P} and \eqref{eq:GMmerging_P} is typically insignificant and does not make much difference on the final result. Therefore, we stick with the SMR for easy implementation. %As such, the multisensor PHD AMD in this protocol appears as an un-weighted union of the local GMs. 

% For simplicity, we use the SMR scheme \cite{Salmond90,Vo06}%, while we note that advanced mixture reduction method such as \cite{Williams06,Runnalls07, Schieferdecker09, Chen10, Crouse11, Ardeshiri15, Li17Nehorai} can also be used. 

A key of MR/OMR is to determine the size of gate for fusing GCs to be merged. In our approach, the distance between two T-GCs, e.g., $\mathcal{N}(\mathbf{x};\mathbf{m}_a,\mathbf{P}_a)$ and $\mathcal{N}(\mathbf{x};\mathbf{m}_b,\mathbf{P}_b)$, is measured by the Mahalanobis-type distance %\cite{Uhlmann01} 
as follows

\begin{equation} \label{eq:Mahalanobis}
C_{a,b}\triangleq\big(\mathbf{m}_a - \mathbf{m}_b\big) \mathbf{P}^{-1} \big(\mathbf{m}_a - \mathbf{m}_b\big)^\mathrm{T} \hspace{0.5mm},
\end{equation}
where $\mathbf{P}$ is chosen as the covariance of the GC of higher weight. %or a tightener linear fused covariance:
% \begin{equation} \label{eq:PHD7}
% \big({\mathbf{P}_{a}}^{-1}+{\mathbf{P}_{b}}^{-1}\big)^{-1}
% \end{equation}

%Obviously, this distance reduces to the square of the Euclidean distance if $\mathbf{P} = \mathds{1}$.

A gate threshold $\tau$ is needed to control the GC grouping such that only T-GCs that are of distance smaller than $\tau$ will be merged,
for trade-off %. Clearly, a larger merging threshold/gate indicates that more T-GCs will be merged, leading to a smaller fused GM size but larger merging errors. Therefore, we need to design an appropriate gate for a trade-off 
between the resultant GM size and merging error. 
%The GM merging scheme will start by searching over all T-GCs received from neighbors and find that close to each of the host GCs at the host sensor, e.g. sensor $a$. 
The gate has a clear physical meaning as it indicates the distance no further than $\tau$ standard deviations from
the state estimate that the real state lies in with a probability, or at least a lower bound on the probability. When the estimate is unbiased and inferred from Gaussian random variables, the probability that the real state $\mathbf{x}$ lies within $\tau$ standard deviations of the state estimate $\mathbf{\hat{x}}$ is given by \cite{Ye00}
\begin{equation} \label{eq:confidence}
\mathrm{Pr}\big[(\mathbf{x}-\mathbf{\hat{x}}) \mathbf{P}^{-1}(\mathbf{x}-\mathbf{\hat{x}}) \big] \leq \gamma \Big(\frac{d}{2},\frac{\tau^2}{2}\Big)\hspace{0.5mm},
\end{equation}
where $\mathbf{P}$ is the error-covariance matrix of the estimate $\mathbf{\hat{x}}$, $\gamma$ is the lower incomplete Gamma distribution and $d$ is
the cardinality of the state vector.

Due to the uniform fusing weights, the T-GM combination and merging will certainly raise the weight sum at sensor $a$ to
\begin{equation} \label{eq:W_rise}
\tilde{W}_a(t)=W_a(t-1)+\sum_{j\in {N_a}}W_{j,\mathrm{T}}(t-1)\hspace{0.5mm},
\end{equation}
where $W_a(t-1)$ and $W_{j,\mathrm{T}}(t-1)$ are the whole GM and the T-GM at local sensor $a$ after $t-1$ iterations of P2P communication, respectively and $W_{j,\mathrm{T}}(0)$ is defined as in \eqref{eq:PHD5}. 

As such, the weights of all GCs $w_a^{(i)}, i=1,2,\cdots, J_a(t)$ after merging at each iteration $t$ need to be re-scaled for correct cardinality estimation, where $J_a(t)$ is the GM size at iteration $t$ and we have $\tilde{W}_a(t) = \sum_{i=1}^{J_a(t)} w_a^{(i)}$. To this end, we may apply average consensus on the cardinality estimates, namely ``cardinality consensus'', which will be carried out simultaneously with the proposed T-GM consensus. This is feasible because the cardinality estimates yielded by the PHD filter \eqref{eq:4} are scalar-valued parameters, for which the standard average consensus based on Metropolis weights \cite{Xiao04, Xiao05} is straightforwardly applicable. %But different from the real time cardinality consensus that is carried out in parallel to the local filtering without causing any communication delay as proposed in \cite{Li17b}, % in this work , which is after obtaining the PHD posterior $D_{k|k}(\mathbf{x})$. Comparably, the former seeks consensus on the cardinality prior.

To this end, the local GM weight sums will also be disseminated in neighborhood along with the T-GCs for consensus and we have the following proposition. 

\textbf{Proposition 2 (Cardinality Consensus)}. The Metropolis weights based average consensus is applied to update the local weight sum at each communication iteration as follows:
\begin{equation}\label{eq:CC}
{W}_a(t) = \sum_{l \in \{ a,N_a\}} \omega_{l\rightarrow a} W_l(t-1)\hspace{0.5mm},
\end{equation}
which will be used for re-scaling the weights of all GCs at each communication iteration $t$, i.e.,
\begin{equation}\label{eq:CC_W_scaling}
w_a^{(i)} \leftarrow \beta_a w_a^{(i)},  \forall i=1,2,\cdots, J_a(t)\hspace{0.5mm}.
\end{equation}
where $\beta_a = \frac{W_a(t)}{\tilde{W}_a(t)}$.

In order to analyze the change of the weight of FA-GCs due to \eqref{eq:CC_W_scaling}, we make two approximate albeit reasonable assumptions: $W_a(t-1) \approx W_b(t-1)$ and $W_{b,\mathrm{T}}(t-1) \approx \alpha W_b(t-1)$, for all $b \in N_a$. Clearly, $\alpha < 1$. As such, \eqref{eq:W_rise} and \eqref{eq:CC} reduce to $\tilde{W}_a(t) \approx (1+ \alpha|N_a|) W_a(t-1)$ and $W_a(t) \approx W_a(t-1)$, respectively. These read
\begin{equation} \label{eq:beta}
\beta_a \approx \frac{1}{1+\alpha|N_a|}
\end{equation}

In most cases, the T-GCs take the majority of the weight sum, namely $\alpha > 0.5$. For example, when sensor $a$ has two neighbors namely $|N_a|=2$, $\beta_a \stackrel{\text{appr}}{<} 0.5$, which indicates that the weight of FA-GCs at local sensor $a$ will be approximately reduced to less than a half by \eqref{eq:CC_W_scaling}. Comparably, the T-GCs merge with many others from neighbors which will counteract such reduction but instead their weight will likely be increased slightly. That is, the target-likely signal will be enhanced while the FA-suspicious signal will be weakened or even ultimately removed by pruning. This will give rise to the SNR at local sensors, reducing the possibility of causing false alarms and facilitating more accurate estimation.
% By merging strong T-GCs that more likely correspond to a target signal than weak ones, the target-likely signal will be enhanced in the PHD. In contrast, the FA-suspicious signal will be reduced via re-weighting \eqref{eq:CC_W_scaling} or even ultimately removed by mixture pruning. As a result, this will give rise to the SNR at local sensors, facilitating less likely false alarms and more accurate estimation. %In fact, the SNR will increase with the number of fused sensors and the fused merging result will converge to the desired extremely accurate estimates. 
We refer to this fusion protocol as \textit{conservative GM merging} (CGMM). %It is necessary to note that the reason to apply Metropolis weights in \eqref{eq:CC} is the same as that given in the original work \cite{Xiao04, Xiao05}, namely that it aims to improve the average convergence speed and has no direct connection to the dependence between sensors as considered by GCI. 

For illustrative purposes, CGMM operations including GC selection, transmission, merging and re-weighting are given, as shown in Fig.1, for a 1-dimensional state space model using two sensors. In the top row, the original PHDs at local sensors are given as GMs, each having two significant GCs that likely correspond to targets. The sensors share them with each other, and then GC merging (and pruning to remove very insignificant FA-GCs) and re-weighting are performed, as shown in the middle and bottom rows, respectively. The resulting GMs are reweighted such that they have the same weight sum for cardinality consensus. At the end, the T-GCs will become more significant due to merging while the FA-GCs are weakened, leading to an enhanced SNR. %If there is one misdetection occurred at one sensor which leads to only one significant GC, CGMM can only reduce the corresponding GC in the other sensor to half-weighted rather than make it vanish as in GCI.

%For a large network, more iterations of communication and merging will gain further improvement on the SNR since the target signals are arguably close to identical distributed while the clutter/false-alarms are not and their signals will less likely be fused in CGMM. 
\begin{figure}\label{fig:1}
\centering
\includegraphics[width=8 cm]{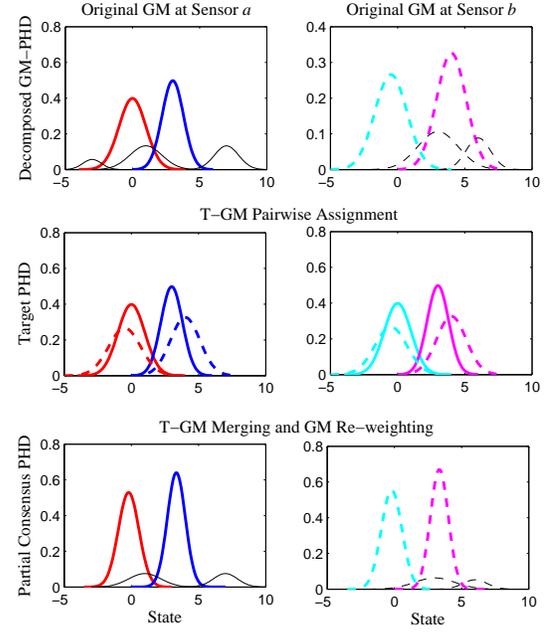}
\caption{Illustration of the proposed CGMM fusion between two sensors. The GCs originally formed by sensor $a$ are given in solid lines, while those formed by sensor $b$ are given in dash lines. Significant components of the GM are given in color while the insignificant ones in black}
\end{figure}

\subsection {Conservative Fusion P.2.2: Pairwise GM Averaging}

CGMM can not guarantee all received GCs to be merged to the local T-GM unless a sufficiently high merging threshold $\tau$ is used which will in turn cause greater merging errors. As a result, the local T-GM size will likely grow against the P2P communication iteration% and typically the CGMM algorithm suffers from high communication cost
. As an alternative, we integrate the received T-GM to the local GM in a way such that each of the received T-GC is fused to the nearest host GC immediately if closely enough or otherwise abandoned. This will retain a promisingly constant local GM size during networking. To this end, we associate the received T-GMs from neighbors with the host T-GM based on Hungarian assignment (also called Munkres algorithm \cite{Munkres57}) and gating. Then, only associated GCs will be fused in the manner of ``averaging''. 

For clarity, denote the host sensor as $a$, one of its neighbors as $b \in N_a$, and the number of original T-GCs as $n_a$ and $n_b$, respectively. To carry out Hungarian assignment, a $n_a \times n_b$ cost matrix needs to be constructed as follows: if $n_a \leq n_b$ (otherwise transpose the matrix)
\begin{equation} \label{eq:26}
\left[ \begin{array}{ccc}
C_{1,1} & \cdots & C_{1,n_b} \\
\cdots & \cdots & \cdots \\
C_{n_a,1} & \cdots & C_{n_a,n_b} \\
\end{array} \right] \hspace{0.5mm},
\end{equation}
where %$C_{i,j}= \big(\mathbf{m}_a^{(i)} - \mathbf{m}_b^{(j)}\big) \mathbf{P}^{-1} \big(\mathbf{m}_a^{(i)} - \mathbf{m}_b^{(j)}\big)^\mathrm{T}$ 
$C_{i,j}$ is the Mahalanobis-type distance as in \eqref{eq:Mahalanobis} between GC $i$ from sensor $a$ and GC $j$ from sensor $b$.% respectively and where the covariance $\mathbf{P}$ is chosen as that of the host GC, namely $\mathbf{P}_a^{(i)}$.

% \begin{equation} \label{eq:27}
% || \mathbf{m}_a^{(i)} - \mathbf{m}_b^{(j)}||_2 =\sqrt[]{(\mathbf{m}_a^{(i)} - \mathbf{m}_b^{(j)})^\mathrm{T} (\mathbf{m}_a^{(i)} - \mathbf{m}_b^{(j)})} 
% \end{equation}
% The covariance of the GCs may also be taken into account in the definition of $C_{i,j}$.

The optimal assignment is given by choosing one entry at each row of the cost matrix \eqref{eq:26}, all entries belonging to different columns, with a minimal sum. That is, the optimization cost function is given by 

\begin{equation} \label{eq:28}
\underset{\pi \in \Pi_{n_b}}{\mathrm{argmin}} \sum_{i=1}^{n_a} {C_{i,\pi(i)}} \hspace{0.5mm},
\end{equation}
where $\Pi_{n_b}$ is permutations of $n_b$ entries and $\pi(i)$ indicates the $i$th entry in the permutation $\pi$.

The Hungarian algorithm has proven to be efficient in solving the above assignment problem in polynomial time \cite{Munkres57}. As a result, all the GCs in the smaller GM set will be assigned to one and only one GC from the larger GM set, while the GCs from the latter will be assigned to one or no GC from the former. We call this \textit{one to one-or-zero} assignment where the unassigned component will be unfused. %A short description of the procedure for implementing the Hungarian algorithm is given in Appendix. It has proven efficient in solving the assignment in polynomial time \cite{Munkres57}.

%Hereafter we mark all the assigned GCs from neighboring sensors by using the subscript $\ell$. 
% To avoid improper assignment that can cause very erroneous results,
Furthermore, a double-checking step is required so that only the assigned pair that are close enough are to be fused. Again, we use the Mahalanobis-type distance as given in \eqref{eq:Mahalanobis} to measure the distance between two GCs and the rule \eqref{eq:confidence} to design the gating threshold. 
Any assignment that does not fall in the valid gate will be canceled. %For ``conservative'' consideration, the
The unassigned T-GCs will be abandoned and will not be involved in any fusion if it is received from the neighbor, otherwise it remains unchanged at local sensor. This will guarantee promisingly that the GM set of constant size at each local sensor. Finally, the pairwise assigned GCs from %two neighboring sensors 
will be fused in the manner of Metropolis weights-based averaging at each iteration, %which resembles a type of association based weighted GM merging, 
as given below. 

First, Metropolis weights are used to re-scale all T-GC weights according to their origination sensor. Then, the associated GCs are labeled with the same, to say $\ell$, originating from whether sensor $a$ or its neighbors $N_a$. As addressed above, each sensor contributes a maximum of one GC to each group. %and there might be no GC contributed to this group $\ell$ from sensor $l \in N_a$ if the size of its T-GM is smaller than that of sensor $a$. 
%To distinguish this, w
We denote all the sensors that contribute one GC to group $\ell$ by a set $S_a^{[\ell]} \in \{a, N_a\}$ for which we have the following proposition for conservative fusion. 

\textbf{Proposition 3 (GM averaging)} All T-GCs associated in $S_a^{[\ell]}$ are averaged, resulting in a new single GC $\mathcal{N}(\mathbf{x};\mathbf{m} _a^{[\ell]}(t),\mathbf{P}_a^{[\ell]}(t))$ with weight $w_a^{[\ell]}(t)$ as follows:
\begin{equation}\label{eq:Aver_w}
w_a^{[\ell]}(t)= \frac{\sum_{l \in S_a^{[\ell]}} \omega_{l\rightarrow a} w_l^{[\ell]}(t-1)}{\sum_{l \in S_a^{[\ell]}} \omega_{l\rightarrow a}}\hspace{0.5mm},
\end{equation}
\begin{equation}\label{eq:Aver_x}
\mathbf{m}_a^{[\ell]}(t) = \frac {\sum_{l \in S_a^{[\ell]}} \omega_{l\rightarrow a} w_l^{[\ell]}(t-1)\mathbf{m}_l^{[\ell]}(t-1)} {\sum_{l \in S_a^{[\ell]}} \omega_{l\rightarrow a} w_l^{[\ell]}(t-1)}\hspace{0.5mm},
\end{equation}
\begin{equation}\label{eq:Aver_P}
% \mathbf{P}_a^{[\ell]}(t) = \sum_{l \in S_a^{[\ell]}} \tilde{w}_i\tilde{\mathbf{P}}_l^{[\ell]}(t-1) \hspace{0.5mm},
\mathbf{P}_a^{[\ell]}(t) = \mathbf{P}_\mathrm{OMR} % \tilde{\mathbf{P}}_i^{[\ell]}(t-1) 
\hspace{0.5mm},
% \mathbf{P}_a^{[\ell]}(t) = \frac{\sum_{l \in S_a^{[\ell]}}\omega_{l\rightarrow a} w_l^{[\ell]}(t-1)\mathbf{P}_{l\rightarrow a}^{[\ell]}}{\sum_{l \in S_a^{[\ell]}} \omega_{l\rightarrow a} w_l^{[\ell]}(t-1)}
\end{equation}
where $\mathbf{P}_\mathrm{OMR}$ is given as in \eqref{eq:minGM_P} by substituting $\mathcal{I} = S_a^{[\ell]}$ and $\tilde{\mathbf{P}}_i= \mathbf{P}_i^{[\ell]}(t-1) + \big(\mathbf{m}_i^{[\ell]}(t-1)-\mathbf{m}_a^{[\ell]}(t)\big) \big(\mathbf{m}_i^{[\ell]}(t-1)-\mathbf{m}_a^{[\ell]}(t)\big)^\mathrm{T}$ for all $i \in S_a^{[\ell]}$. 

% \begin{equation}\label{eq:Aver_P_min}
% \begin{split}
% %& \mathbf{P}_{l\rightarrow a}^{[\ell]} 
% % & \tilde{\mathbf{P}}_i^{[\ell]}(t-1) 
% & \tilde{\mathbf{P}}_i= \mathbf{P}_i^{[\ell]}(t-1) \\
% & + \big(\mathbf{m}_i^{[\ell]}(t-1)-\mathbf{m}_a^{[\ell]}(t)\big) \big(\mathbf{m}_i^{[\ell]}(t-1)-\mathbf{m}_a^{[\ell]}(t)\big)^\mathrm{T}\hspace{0.5mm}.
% \end{split}
% \end{equation}

% \begin{itemize}
% \item Metropolis weights are used to re-scale the GC weights for fusion and so, the fused weight is an average in \eqref{eq:Aver_w} rather than a sum in \eqref{eq:GMmerging_w} in the CGMM;
% \item A trace-smaller yet conservative covariance matrix is used as in \eqref{eq:minGM_P} rather than in \eqref{eq:GMmerging_P}.
% \end{itemize}

As shown, the calculation of the fused state and covariance is akin to that of CGMM but they are different at the fused weight, which is an average in \eqref{eq:Aver_w} rather than a sum in \eqref{eq:GMmerging_w} in the CGMM. Therefore, \eqref{eq:W_rise} does not hold here but instead roughly $\tilde{W}_a(t) \approx W_a(t-1)$ and $\beta_a \approx 1$ instead of \eqref{eq:beta}. Comparably speaking, the FA-GCs will not be so significantly weakened in CGMA as in CGMM. However, the cardinality average consensus scheme as given in Proposition 2 can still be applied at each communication iteration to re-weight all GCs including the averaged one given in \eqref{eq:Aver_w} and the insignificant GC that is not involved in fusion%, for improving the cardinality estimation over the network
. Overall, we refer to this consensus protocol as \textit{conservative GM averaging} (CGMA).

\subsection{Potential Extensions}
The proposed union-type, conservative fusion and partial-consensus-based distributed GM-PHD fusion can be extended in terms of both the communication protocol and the local filter. In the former, other consensus protocols other than averaging schemes (e.g., diffusion \cite{sayed14,Dedecius17}, flooding \cite{Li17flooding}) can be applied, while in the latter, multi-Bernoulli filters \cite{Guldogan14, Wang17, Suqi17} and even particle filter-based RFS filters can be employed based on novel mixture reduction or particle-resampling schemes. %We omit the detail of these extensions to save space and to avoid distracting our key contribution on distributed fusion.

\section{Simulations}
\label{sec:simulation}
In this section, the proposed CGMM and CGMA approaches for distributed GM-PHD fusion are evaluated for tracking either a single target or simultaneous multiple targets, with comparison to the benchmark GCI/KLA fusion \cite{Battistelli13} and the pure cardinality consensus based on either flooding (CCF) \cite{Li17flooding} or Metropolis weights-based averaging (CCA). % \cite{Li17b}
These different distributed filters will be evaluated on the same ground truths, sensor data series and sensor network setting up.

For MR in all filters: GCs with a weight lower than $10^{-4}$ will be truncated, any two GCs closer than Mahalanobis-distance $\tau =5$ will be merged, and the maximum number of GCs is 50 in the case for tracking a single target and 100 in the case of multiple targets. The proposed partial consensus is carried out based on the rank rule P.1.1 for selecting the T-GCs. To save communication in GCI, we suggest a threshold $w_c = 0.005$ %to limit the number of GCs for transmission 
such that only the GC with a weight larger than $w_c$ will be disseminated to neighbors and then be considered in the subsequent fusion. %only GC with a weight larger than 
%We set $w_c=0.005$. %, which is the threshold yielding the best OSPA gain over a scope we tested. % for this network. 
%As addressed in Remark 3, GM merging is critical in GCI not only to speed up the subsequent computation but also to validate \eqref{eq:PHD2}, for which, we use the gating threshold $\tau =5$. 

The optimal sub-pattern assignment (OSPA) metric \cite{Schuhmacher08} is used to evaluate the estimation accuracy of the filter, with cut-off parameter $c=1000$ and order parameter $p=2$; for the meaning of these two parameters, please refer to \cite{Schuhmacher08}. We refer to the average of OSPAs obtained by all sensors in the network at each sampling step as \textit{Network OSPA}. The average of the Network OSPAs over all filtering steps is called \textit{Time-average Network OSPA}. 
To evaluate the communication cost, we record a GC that consists of a weight parameter (1 tuple), a 4-dimensional vector mean (4 tuples), and a 4$\times$4 -dimension matrix covariance (16 tuples) as data size 21 tuples and the scale-valued cardinality parameter as 1 tuple. Given that the covariance matrix is symmetric, only 10 tuples are needed here and then a GC only needs 15 tuples for data storing. Furthermore, to measure the efficiency of different consensus protocols, we define a consensus efficiency (CE) measure regarding the average OSPA reduction gained by sharing each tuple of network data as follows:
\begin{equation} \label{eq:CE}
\mathrm{CE}\triangleq\frac{\text{OSPA reduction due to communication}}{\text{Network communication cost (no. tuples)}}\hspace{0.5mm}.
\end{equation}

\begin{figure}\label{fig:2}
\centering
\includegraphics[width=7 cm]{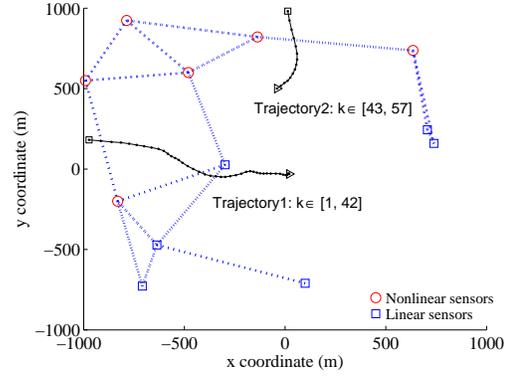}
\caption{Tracking scenario, target trajectories and sensor network}
\end{figure}

The simulations are set up in a scenario over the planar region $[-1000,1000]\mathrm{m}\times [-1000,1000]\mathrm{m}$ which is monitored by a randomly generated sensor network (with total 12 sensors and diameter $6$) as shown in Fig.2. We assume two different ground truths for the target trajectories, to be presented in the following two subsections respectively. To capture the average performance, we perform each simulation 100 MC runs with independently generated observation series for each run. Different numbers $t$ of P2P communication iterations from 0 (without applying any information disseminating) to 12 (twice the network diameter) are applied to all consensus schemes. To set up the local filter, the ground truth is simulated as follows: The target birth process follows a Poisson RFS with intensity function $ \gamma_k(\mathbf{x})= \sum_{i=1}^4 \lambda_i\mathcal{N}(.;\mathbf{m}_i,\mathbf{Q}_r)$, where the Poisson rate parameters $\lambda_1=\lambda_2=\lambda_3=\lambda_4$, the Gaussian parameters $\mathbf{m}_1=[0,0,0,0]^\mathrm{T}$, $\mathbf{m}_2=[-500, 0, -500, 0]^\mathrm{T}$, $\mathbf{m}_3=[0,0,500,0]^\mathrm{T}$, $\mathbf{m}_4=[500, 0, -500, 0]^\mathrm{T}$, $\mathbf{Q}_r=$ diag$([400,100,400,100]^\mathrm{T})$, and diag$(\mathbf{a})$ represents a diagonal matrix with diagonal $\mathbf{a}$. In addition, the target intensity function spawn from target $\mathbf{u}$ is given as $ b_k(\mathbf{x} | \mathbf{u})= 0.05\mathcal{N}(.;\mathbf{u},\mathbf{Q}_b)$, where $\mathbf{Q}_b=$ diag$([100,400,100,400]^\mathrm{T})$. Each target has a time-constant survival probability $p_S(\mathbf{x}_k)=0.99$ and the survival target follows a nearly constant velocity motion as given
\begin{equation}\label{eq:GMPHD5}
\mathbf{x}_k= \left[ \begin{array}{cccc}
1 & \Delta & 0 & 0 \\
0 & 1 & 0 & 0 \\
0 & 0 & 1 & \Delta \\
0 & 0 & 0 & 1 \\
\end{array} \right] \mathbf{x}_{k-1}+ \left[ \begin{array}{cc}
\Delta^2/2 & 0 \\
\Delta & 0 \\
0 & \Delta^2/2 \\
0 & \Delta \\
\end{array} \right] \mathbf{u}_k\hspace{0.5mm},
\end{equation}
where $\mathbf{x}_k=[p_{x,k},\dot{p}_{x,k},p_{y,k},\dot{p}_{y,k}]^\mathrm{T}$ with the position $[p_{x,k},p_{y,k}]^\mathrm{T}$ and the velocity $[\dot{p}_{x,k},\dot{p}_{y,k}]^\mathrm{T}$, the sampling interval $\Delta =1$s, and the process noise $\mathbf{u}_k \sim \mathcal{N}(\mathbf{0}_2,25\mathbf{I}_2)$.

Without loss of generality, we employ a hybrid sensor network that consists of both linear and nonlinear observation sensors which run linear GM-PHD filter and unscented transform based nonlinear GM-PHD filter \cite{Vo06}, respectively. The sensors are ordered from 1 to 12, where the sensors no.1-6 generate linear observation (which are referred to as linear sensors, marked by square in Fig.2) while the rest (no. 7-12) generate nonlinear observation (referred to as nonlinear sensors, marked by circles in Fig.2). The linear sensors have the same time-constant target detect probability $p_D(\mathbf{x}_k)=0.95$ and the linear position observation model given as follows
\begin{equation}\label{eq:GMPHD6}
\mathbf{z}_k= \left[ \begin{array}{cccc}
1 & 0 & 0 & 0 \\
0 & 0 & 1 & 0 \\
\end{array} \right] \mathbf{x}_k+ \left[ \begin{array}{c}
v_{k,1} \\
v_{k,2} \\
\end{array} \right]\hspace{0.5mm},
\end{equation}
with $v_{k,1}$ and $v_{k,2}$ as mutually independent zero-mean Gaussian noise with the same standard deviation of 10. 

The FOV of each nonlinear sensor is a disc of radius 3000m centralized with the sensor's position $[s_{n,x},s_{n,y}]^\mathrm{T},n=16,..,30$, which is able to fully cover the scenario. The target detection probability depends on the target position $[p_{x,k},p_{y,k}]^\mathrm{T}$, as given by $p_D(\mathbf{x}_k)=0.95 \mathcal{N}([|p_{x,k}-s_{n,x} |,|p_{y,k}-s_{n,y}|]^\mathrm{T}; \mathbf{0},6000^2 I_2)/\mathcal{N}(0;0,6000^2I_2)$ and the nonlinear range and bearing observation is given by
\begin{equation} \label{eq:GMPHD7}
\mathbf{z}_k= \left[ \begin{array}{c}
\sqrt{(p_{x,k}-s_{n,x} )^2+(p_{y,k}-s_{n,y})^2}\\
\arctan\big((p_{y,k}-s_{n,y})/(p_{x,k}-s_{n,x})\big) \\
\end{array} \right] + \mathbf{v}_k\hspace{0.5mm},
\end{equation}
where $\mathbf{v}_k \sim \mathcal{N}( ;\mathbf{0},\mathbf{R}_k )$, with $\mathbf{R}_k=$ diag$\left([\sigma_r^2,\sigma_\theta^2 ]^\mathrm{T}\right), \sigma_r=10$m, $\sigma_\theta=\pi/90$ rad/s.

Clutter is uniformly distributed over each sensor's FOV with an average rate of $r$ points per scan. For the nonlinear sensors, we set $r=5$ in both scenarios indicating a clutter intensity $\kappa_k = 5/3000/2\pi$ while for th linear sensors we set $r=5$ for the first single target scenario and $r=10$ for the second multiple target scenario, indicating clutter intensities $\kappa_k = 5/2000^2$ and $\kappa_k = 10/2000^2$, respectively. 

% To avoid distracting our attention on distributed GM fusion protocol, we primarily focus on the linear observation model for simplicity and clarity, which is given by
% \begin{equation} \label{eq:GMPHD6}
% \mathbf{z}_k= \left[ \begin{array}{cccc}
% 1 & 0 & 0 & 0 \\
% 0 & 0 & 1 & 0 \\
% \end{array} \right] \mathbf{x}_k+ \left[ \begin{array}{c}
% v_{k,1} \\
% v_{k,2} \\
% \end{array} \right]
% \end{equation}
% with $v_{k,1}$ and $v_{k,2}$ as mutually independent zero-mean Gaussian noise with the same standard deviation of 10. 

% Each target can be detected by any sensor with probability $p_D(\mathbf{x}_k)=0.9$. Clutter is uniformly distributed over the region with an average rate of $r$ points per scan, which corresponds to the clutter density $r/2000^2$. In our simulations, we apply different clutter rates.

\begin{figure*}\label{fig:3}
\centering
\includegraphics[width=16 cm]{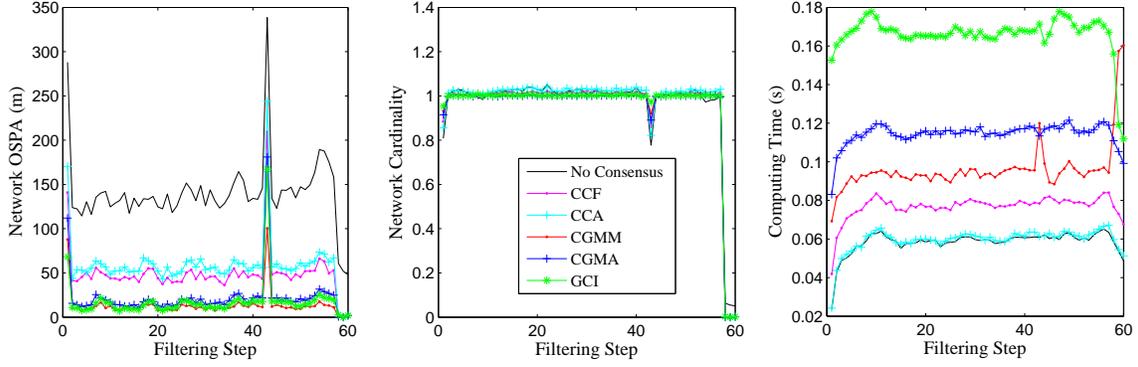}
\caption{Network OSPA, online estimated number of targets and computing time of different consensus protocols for each filtering step when six iterations of P2P communication are applied}
\end{figure*}

\begin{figure*}\label{fig:4}
\centering
\includegraphics[width=16 cm]{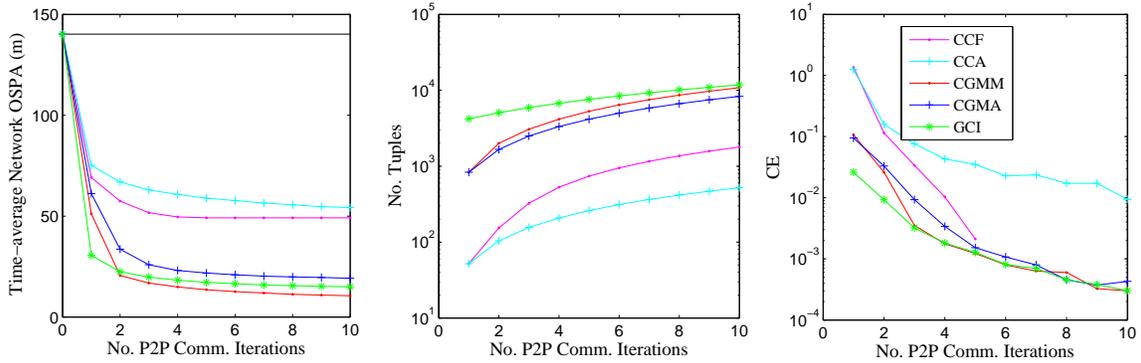}
\caption{Time-averaged network OSPA, network communication cost and CE against P2P communication iterations}
\end{figure*}

\subsection{Single Target Scenario}

First, we limit the maximal number of targets that simultaneously exist in the scenario to one for generating the ground truth as that new target which can only appear after the existing target disappears. Also, there is no target spawning. That is to say, the tracking at any time actually involves maximally one target, which is favorable for CI/GCI. The network and the ground truth of the target trajectories are given in Fig.2. 

When a total of $t=6$ P2P communication iterations are applied, the Network OSPA, the online estimated number of targets, and the computing time of different consensus protocols for each filtering step are given in Fig.3, separately. For different numbers of communication iterations, the time-averaged network OSPA, time-averaged network communication cost, and CE of different consensus protocols are given in Fig.4, separately. We have the following key findings% and more discussion will be given in Section \ref{sec:simulation}.C
:

\begin{enumerate}
\item All consensus schemes converge with the increase of the number of P2P communication iterations; meanwhile, the more iterations, the higher the communication and computing cost and the lower OSPA. In particular, when $t=1$ (each sensors only share information with their immediate neighbors), GCI yields the best performance, providing the lowest Network OSPA and time-average OSPA over all. When $t\geq 2$ , CGMM yields the lowest OSPA over all which is even better than the GCI.
\item When $t=6$, the computing time required by GCI is the most and is much higher than that of the others, while CGMA and CGMM come as the second and the third, respectively.
\item On communication cost, CGMM costs slightly more than CGMA, both smaller than GCI, especially when few P2P communication iterations ($t<6$) are applied. 
\item Cardinality consensus has improved the cardinality estimation in all consensus schemes. However, the benefit of pure cardinality consensus is limited, whether CCF by flooding or CCA by averaging, which converges to a level that is significantly inferior to that of the others including GCI, CGMM, and CGMA. However, this is achieved at the price of significantly less computation and communication. 
\item The CE decreases with the increase of $t$ in all consensus schemes. % primarily because the OSPA reducing speed decreases with the increase of the number of iterations. 
Overall, CCA yields the highest CE and CCF comes second; comparably, CCF converges faster at the expense of more communication cost than CCA. When $t\geq 6$, the CCF achieves complete consensus/convergence \cite{Li17flooding} and so it will no further reduce the OSPA, leading to a zero CE. In regard to CE, CGMA slightly outperforms GCI and CGMM while the latter two perform similar. 
\item Local GM size remains constant favorably during network communication at each filtering step using CCF, CCA, and CGMA but varies (mainly increases with $t$) due to CGMM and GCI. This is one advantage that CGMA has over CGMM and GCI.
\end{enumerate}

To summarize the results in the single target case, CGMM is a fair alternative to GCI in favor of smaller OSPA and fewer fusion computation and communication, while CGMA is a better choice than GCI in favor of less computation and communication, and higher CE. %However, in the case of serious network communication limitation, merely cardinality consensus is worth considering, although the maximal gain in the sense of filtering accuracy is limited. 
More discussion will be given in Section \ref{sec:simulation}.C.

\subsection{Multiple Target Scenario}
In this case, we extend the maximal number of targets that simultaneously exist in the scenario to three for generating a new ground truth. The trajectories of totally four targets are given in Fig.5 with the starting and ending times of each trajectory noted. %As shown, the targets are fairly well distant from each other in the time-state space.
%The same five distributed GM-PHD filters are simulated based on the same ground truth and observation series but using different consensus schemes as described in the last section. 

To show the simulation result, similar contents given in Figs. 6 and 7 correspond to those in Figs. 3 and 4, respectively. While some of them give similar indication, e.g., the relative communication and computation cost of different protocols, key new findings are summarized as follows.

\begin{figure}\label{fig:5}
\centering
\includegraphics[width=7.5 cm]{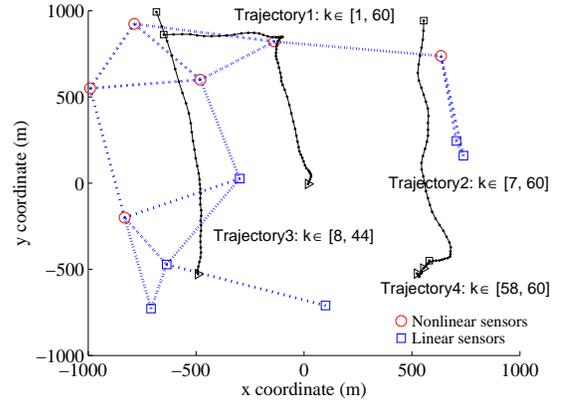}
\caption{Trajectories of simultaneously appearing multiple targets}
\end{figure}

\begin{figure*}\label{fig:6}
\centering
\includegraphics[width=16 cm]{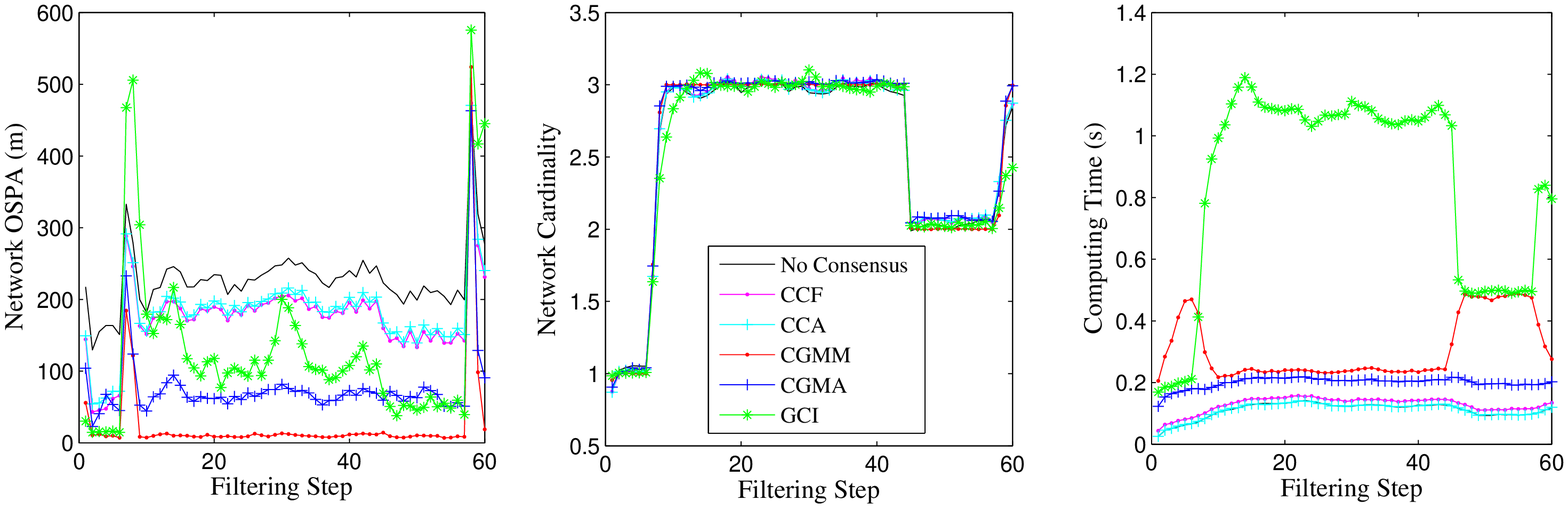}
\caption{Network OSPA, online estimated number of targets and computing time of different consensus protocols for each filtering step when six iterations of P2P communication are applied}
\end{figure*}

\begin{figure*}\label{fig:7}
\centering
\includegraphics[width=16 cm]{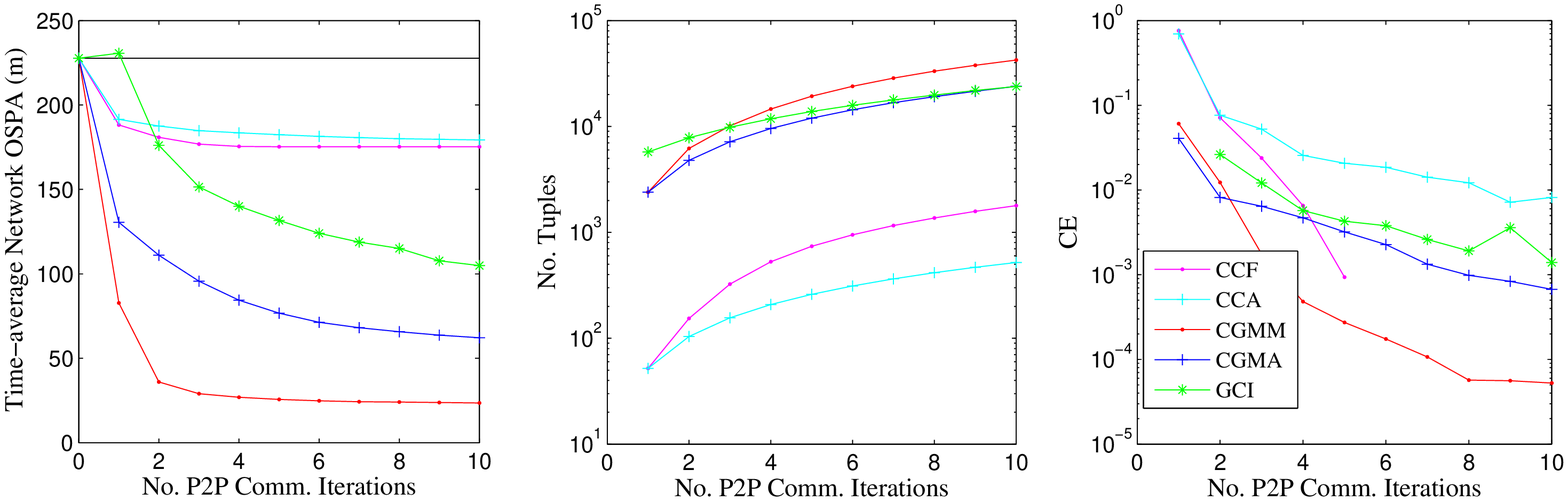}
\caption{Time-averaged network OSPA, network communication cost and CE for different numbers of consensus iterations}
\end{figure*}

\begin{enumerate}
\item On filtering accuracy, CGMM gets the minimum network OSPA which significantly outperforms the others and CGMA comes second. In particular when $t=1$, CGMM that consumes the same communication as CGMA and smaller than GCI, yields the largest OSPA reduction, even more significant than that of the others by performing multiple iterations of communication. This simply indicates that, only immediate neighborhood T-GM information sharing for partial consensus outperforms sophisticated averaging fusion like GCI based on multiple iterations for complete consensus. We refer to this as ``\textit{many could be better than all}''. 
\item Somehow surprisingly, GCI does not benefit the filter accuracy when $t=1$. We leave here an attempt to explain the reason.
\item On communication, %CGMA costs less than GCI if the number of P2P communication iterations $t\leq 7$ and similar if otherwise. 
CGMM costs less than GCI if $t\leq 3$ but more if otherwise. CGMA communicates always less than CGMM and GCI.
\item On computation, GCI costs the most again while CGMM and CGMA perform similar except when the true number of targets is one for which CGMM may be more computing costly than GCI.
\item GCI shows delay at detecting new born targets as it will even increase OSPA compared with the centralized filter with no sensor communication at time $k\in [7,11]$ and $k\in [58,60]$ as shown in the middle sub-figure of Fig.6. %Relevant findings have been reported in \cite{Gunay16} and \cite{Wang17}, as aforementioned.
This obtuse capability in new target detection is indeed unfavorable, significantly reducing the filtering accuracy (as shown in the left sub-figure of Fig.6). 
\item CGMA and CGMM perform worse on CE than GCI except the first communication iteration, all inferior to CCF and CCA again. But, their achievements in reducing OSPA is to a large degree more significant than that of GCI and others. 
\end{enumerate}

To summarize this multiple-target case, both CGMA and CGMM (in particular) afford better alternative to GCI in favor of smaller OSPA, less fusion computation and even less communication for the same OSPA reduction gain.

\subsection{Further Discussion}
Experimental findings reported in the literature are notable. The performance of GCI is greatest for few sensors and distant targets \cite{Uney13, Battistelli14a} or only a single target \cite{Battistelli13, Battistelli14b, Guldogan14}. Closely-distributed targets in dense clutter environment have not been particularly considered except few works such as \cite{Battistelli15, Wang17}, which just showed that GCI made worse result when targets are close. For example, the cardinality estimation is worse at around time $k=800$s when more iterations of GCI fusion are applied, as shown in Fig.5-7 of \cite{Battistelli15}. Delay has also been observed in estimating the number of targets when new targets appear in the scenario in \cite{Gunay16}. More specifically, the simulation given in Section V.A of \cite{Wang17} has explicitly demonstrated that GCI will degrade the local PHD filter in the case of close targets whose distance is under a specific threshold and/or in the case of low SNR. Deficiency of GCI for handling misdetection has been particularly noticed in \cite{Yu16,Yi17}. Relatively, the findings given in \cite{Hwang04} suggested that the arithmetic average method is most robust to incorrect information than the geometric average. It has also been demonstrated that the CI provides estimation error covariance that is not honest but pessimistic for track fusion with feedback, inferior to the minimum variance rule \cite{Mori12}. % All of these support our conclusion that the GCI can easily fail (in the sense that it does not benefit local filters at all) in case that new targets appear frequently or targets move closely. 
% In fact, we have observed the failure of GCI in our own simulation study and we omit them here for saving space. 
% In addition to these findings on GCI, it has also been demonstrated that the CI provides estimation error covariance that is not honest but pessimistic for track fusion with feedback, significantly inferior to the minimum variance rule \cite{Mori12}. More relatively, the findings given in \cite{Hwang04} suggested that CI fusion was worse in the sense that the estimate was more heavily affected by the incorrect sensors. For the problem we consider in this case, local sensors are far from perfect but suffer from misdetection and false alarms. 

In summary of our findings and those given in the literature, the problems that a distributed multi-target filter may potentially suffer from due to GCI include: 
 
\begin{enumerate}
\item Weakness to deal with closely distributed targets and/or low SNR background;
\item Prone to mis-detection or local sensor failure;
\item Delay in detecting new appearing targets;
\item High communication and computation cost for complete consensus.
\end{enumerate}
	
It seems still unclear how to fix these problems on the basis of GCI, even some of the causes have been noted, nor was that our intension in this paper. We leave here direct simulation demonstration about the failures of GCI in complicated multi-target scenarios (e.g., new targets appear frequently, targets move closely or there is a high rate of mis-detection or clutter) in which our proposed approaches demonstrate more significant advantage. In particular, the straightforward arithmetic average based CGMM that can be easily implemented on different filter beds yields significant accuracy benefit with only one or two iterations of P2P diffusion. % with lower communication cost. % In contrast, the conservative property of our approaches can be interpreted as follows: even in certain scenarios, e.g. single target tracking in high SNR environments, our present approaches may not provide the best performance always, but it. 

The merit of the presented partial consensus and conservative arithmetic average fusion is not only on reliable and significant consensus benefit, but also on inexpensive communication and computation for complying with the need of real time filtering. It is very crucial to note that, a key challenge in many large-scale WSN scenarios comes exactly from limitations imposed on the communication bandwidth/power allowance and the sensor computing capability because the nodes are low powered wireless devices.

\section{Conclusion} % may be removed?
\label{sec:conclusion}
For distributed GM-PHD fusion, this paper has proposed a notion of``partial consensus'' which abandons the ultimate goal that the estimate of each sensor converges to the estimation conditioned on all the information over the entire network but instead neighboring sensors shares only highly-weighted GCs with each other and at the end, the network achieves partially consensus. In addition to saving communication and computation, the local SNR at each sensor can be increased because of partial consensus, reducing the possibility to generate false alarms and facilitating more accurate estimation. To further reduce the communication cost, the disseminated significant GCs can be either pairwise averaged or locally merged in a fully distributed and conservative manner. In parallel, the arithmetic average consensus is sought on the GM weight sum at each communication iteration. 

Simulations based on both single target scenario and multiple target scenario have been provided to demonstrate the effectiveness and reliability of our approach with comparison to the GCI, which is the state of the art approach for distributed RFS filter fusion. Although the GCI works well in the single target scenario in the presence of low misdetection and clutter rates, it exhibits severe problems in complicated multi-target scenarios, such as delay in detecting new appearing targets, and incompetent to handle closely distributed targets, intensive clutter and mis-detection, in addition to its high communication and computation cost. 

For multi-target density fusion in the presence of significant clutter and misdetection, our final remarks are:
\begin{itemize}
\item \textit{Many could be better than all}: the concept of ``partial consensus'' is important as can not only save communication and computation but also benefit the accuracy more than the complete consensus. % 
\item \textit{Union outperforms intersection}: Union-format arithmetic average fusion, as the original average consensus is, is computationally easier and provably more reliable than the Intersection-format geometric average fusion, while the former is also more conservative in general.
\end{itemize}

\ifCLASSOPTIONcaptionsoff
 \newpage
\fi

% trigger a \newpage just before the given reference
% number - used to balance the columns on the last page
% adjust value as needed - may need to be readjusted if
% the document is modified later
%\IEEEtriggeratref{8}
% The ''triggered'' command can be changed if desired:
%\IEEEtriggercmd{\enlargethispage{-5in}}

% references section

% can use a bibliography generated by BibTeX as a .bbl file
% BibTeX documentation can be easily obtained at:
% http://mirror.ctan.org/biblio/bibtex/contrib/doc/
% The IEEEtran BibTeX style support page is at:
% http://www.michaelshell.org/tex/ieeetran/bibtex/
\bibliographystyle{IEEEtran}
% argument is your BibTeX string definitions and bibliography database(s)
\bibliography{PartialConsensus}
\end{document}